\newcommand{\paulcite}{\cite{fendley-fermions-in-disguise} }
\newcommand{\paulcitee}{\cite{fendley-fermions-in-disguise}}
\newcommand{\HH}{\mathcal{H}}
\newcommand{\CC}{\mathcal{C}}
\newcommand{\VV}{\mathcal{V}}
\newcommand{\SSS}{\mathcal{S}}
\newcommand{\ZZ}{\mathcal{Z}}
\newcommand{\complex}{\mathbb{C}}
\newcommand{\valos}{\mathbb{R}}
\newcommand{\eps}{\varepsilon}
\newtheorem{thm}{Theorem}
\newtheorem{prop}{Proposition}
\newtheorem{conj}{Conjecture}
\newcommand{\ket}[1]{{\left|#1\right\rangle}}
\newcommand{\bra}[1]{{\left\langle #1\right|}}
\begin{document}
\numberwithin{equation}{section}

\title{Quantum circuits with free fermions in disguise}
\author[1]{Kohei Fukai}
\author[2]{Bal\'azs Pozsgay \thanks{corresponding author, pozsgay.balazs@ttk.elte.hu}}

\affil[1]{Department of Physics, Graduate School of Science, \authorcr The University of Tokyo, \authorcr7-3-1, Hongo, Bunkyo-ku, Tokyo, 113-0033, Japan}
\affil[2]{MTA-ELTE ``Momentum'' Integrable Quantum Dynamics Research Group, \authorcr ELTE Eötvös Loránd University,
  \authorcr P\'azm\'any P. s\'et\'any 1/A, H-1117 Budapest, Hungary}

\maketitle

\begin{abstract}
  Recently multiple families of spin chain models were found, which have a free fermionic spectrum,
  even though they are not solvable by a Jordan-Wigner transformation. Instead, the free fermions emerge as a result of
  a rather intricate construction.
  In this work we consider the quantum
  circuit formulation of the problem. We construct circuits using local unitary gates built from the terms in the local Hamiltonians
  of the respective models, and ask the question:  which circuit geometries (sequence of gates) lead to a free fermionic
  spectrum?
  Our main example is the 4-fermion model of
  Fendley, where we construct free fermionic circuits with various geometries. In certain cases we prove the
  free fermionic nature, while for other geometries we confirm it numerically. Surprisingly, we find that many standard
  brickwork circuits are not free fermionic, but we identify certain symmetric constructions which are. 
  We also consider a recent generalization of the 4-fermion model and obtain the factorization of its transfer matrix, {and subsequently derive a free-fermionic circuit for this case as well.}
\end{abstract}

\section{Introduction}

In many body physics the free theories are the simplest ones: in the absence of interactions practically every physical
quantity can be 
computed with ease, knowing the solution of the one body problem. Sometimes certain models might appear interacting, but
it might be possible to express the local 
variables/fields with free modes/fields. The most famous example for such a transformation is the Jordan-Wigner transformation
\cite{jordan-wigner}, which expresses local spin variables of a spin-1/2 chain using fermionic operators, thereby making
many models solvable by free fermions \cite{XX-original,Schulz-Mattis-Lieb}.

Given the success of the Jordan-Wigner transformation it is very natural to ask: For which spin systems can it be applied,
under what circumstances? And what other types of transformations can exist that map seemingly interacting
operators/fields to completely free ones? If other mappings are found in other situations, then which physical
quantities can be computed with those methods? 

The first question was answered recently in the papers \cite{chapman-jw,japan-free-fermion-JW}, where graph theoretical
criteria were found for the 
applicability of any kind of generalized Jordan-Wigner transformation. The other two questions were considered in
multiple papers in the last couple of years, and this area of research is still in development.

In the seminal paper \paulcite Fendley introduced a family of spin chain models (``free fermions in disguise''), where
the Hamiltonian has 
4-fermion interactions, but there exist hidden free fermionic creation/annihilation operators which eventually diagonalize the
problem. The true free fermions are given by an intricate construction, involving the original interacting fermionic
variables. Distinguished properties of the models are that they are free for every choice of the coefficients in the
Hamiltonian (including disorder), and that every energy level is degenerate with the same degeneracy that grows
exponentially with the volume.
Generalizations of the 4-fermion model appeared in \cite{alcaraz-medium-fermion-1,alcaraz-medium-fermion-2}, and 
several physical properties of these new models were
analyzed afterwards in  \cite{alcaraz-medium-fermion-3,rodrigo-ising-and-ffd,rodrigo-random-ffd}.

A graph theoretical understanding for the ``free fermions behind the disguise'' was given in
\cite{fermions-behind-the-disguise}. Finally, a unified graph theoretical framework was presented in
\cite{unified-graph-th}, which also covers the models solvable by the Jordan-Wigner transformation. Both of these works
consider spin-1/2 models where it is specified which terms can be added to the Hamiltonian, but afterwards the models
are free for 
any choice of the coupling constants.
The two works \cite{fermions-behind-the-disguise,unified-graph-th} give
sufficient conditions for the existence of the disguised
free fermions.
Applications to quantum information theory were considered in
\cite{free-fermion-subsystem-codes}, and an extension to free parafermions \cite{free-parafermion} appeared recently in \cite{free-parafermionic-graphs}.

The previously mentioned results do not
cover all possibilities for hidden free fermions. For example, there exist models, where some fine tuning is necessary between
certain coupling constants. One family of such models was discovered in \cite{sajat-FP-model}; the new family appears to
interpolate between models treated earlier in \cite{cooper-anyon,gyuri-susy-1,gyuri-susy-2}, which appeared unrelated at
that time.
The model of
\cite{sajat-FP-model}  requires some fine tuning of its parameters, still allowing for a large freedom of their
choices. 

\bigskip

All of these works deal with spin models defined through their Hermitian Hamiltonians. In contrast, we ask
the question of how to construct unitary quantum circuits with the hidden free fermionic structures.
There are two main reasons for the study of this question.

The first motivation is to use the new models for practical calculations in many-body systems.  We remind that in earlier literature
the Jordan-Wigner solvable models proved to be very useful for the study of non-equilibrium 
dynamics of many body systems, and this extends to the unitary quantum circuits.
Floquet time evolution with Jordan-Wigner solvable circuits was considered for example in \cite{viktor-floquet}, a free
fermionic cellular automaton was treated in \cite{prosen-150}, and free fermions also appear at the integrable point of
the dual unitary kicked Ising model
\cite{kicked-ising-eredeti,kicked-Ising-space-time-duality-1,dual-unitary-1,dual-unitary-2}.

Regarding hidden free fermions, most of the works focused only the spectrum of the models. The recent work
\cite{sajat-ffd-corr} also studied 
real-time evolution in the model of Fendley, and it demonstrated that exact and explicit formulas can be found for
certain real time correlation functions. The work \cite{sajat-ffd-corr} treated continuous time evolution generated by
the Hamiltonian of \cite{fendley-fermions-in-disguise}, but it also considered unitary quantum circuits that are compatible
with the previously mentioned Hamiltonian.

In \cite{sajat-ffd-corr} it was crucial that the circuits themselves are  free fermionic,
so that exact results can be derived also for discrete time evolution generated by the circuits.
A naive Trotterization following a simple exponentiation
of a free fermionic Hamiltonian could have been used to derive numerical results, but the key point of
\cite{sajat-ffd-corr} was to have exact results for all possible Trotterization steps.
This problem is analogous to the idea of
  ``integrable Trotterization'' \cite{integrable-trotterization} in the sense that there is a distinguished property of
  the Hamiltonian (free 
  fermionic spectrum and/or the integrability), which is broken by many naive Trotterization schemes, but a carefully chosen
  scheme can preserve it without discretization errors.

The work \cite{sajat-ffd-corr} only treated the 4-fermion model of Fendley, and in that model it only considered
one special circuit 
geometry (ordering of local gates). The depth of that circuit was linear in the 
system size. For practical purposes it is desirable to construct circuits with constant depth. This gives a strong motivation
to study alternative circuit geometries. Also, it is desirable to investigate other models, to show that 
exact computations like those presented in \cite{sajat-ffd-corr} are not limited to the 4-fermion model.
  
As a second main motivation we mention the broad field of quantum information science.
Circuits with gates that are bi-linear in fermions are known as matchgate circuits, which are important examples for
classically simulable quantum computation \cite{matchgates1,matchgates2,matchgates3}. Unitary quantum gates based on free
fermions have found applications even in the tensor network discretizations of the AdS/CFT duality
\cite{holocode-review-jahn-eisert}. Hamiltonians with hidden free fermions have found applications in error correcting
codes \cite{free-fermion-subsystem-codes}.
These successes suggest, that quantum circuits with free fermions in disguise could be useful also for quantum information
science.

The recent work \cite{sajat-ffd-corr} demonstrated that the previously mentioned quantum circuits can be
implemented in present day quantum computers, and that a subset of the  dynamical processes they generate are classically
simulable. This raises the question: what are all the possible quantum circuits that are solvable with hidden free
fermions? What is the widest class of classically simulable quantum processes in these models?

\bigskip

In this paper we make the first steps towards answering the questions above. We
investigate unitary quantum circuits with local gates in models with hidden free fermions. We treat the
original model of Fendley \cite{fendley-fermions-in-disguise} but also the generalized model discovered in
\cite{sajat-FP-model}. We derive new free fermionic quantum circuits for these two models.
For the model of Fendley we construct finite depth circuits, which take the form of a special
type of  brickwork circuit. This circuit does not commute with any previously published Hamiltonian, and its hidden free
fermionic structures are currently unknown; their existence is only confirmed by numerical data from exact
diagonalization. 
In contrast, for the model of \cite{sajat-FP-model} we build a linear depth circuit that is compatible
with the Hamiltonian. In this case the hidden free fermions were found in \cite{sajat-FP-model}, and the technical
contribution of this work is to factorize the transfer matrix into local unitary gates.

The structure of the paper is as follows. In Section \ref{sec:problem} we discuss the free fermionic properties in
general, and we formulate our main questions. In Section  \ref{sec:JW} we consider Jordan-Wigner diagonalizable
models; here we just collect well known results.  In Section \ref{sec:ffd} we introduce the
FFD algebra (Free Fermions in Disguise), on which the 4-fermion model of Fendley was built. We discuss basic properties
of this algebra, together with new results for the ``minimal'' representations. We also collect the key results of
\paulcite about the solutions of these models. In Section \ref{sec:ffdc1} we present a free fermionic quantum circuit
within the FFD 
algebra, derived from the results of \cite{fendley-fermions-in-disguise}. In Section \ref{sec:small} we investigate
 the free fermionic circuits for small system sizes. In Section \ref{sec:numerics} we
present new free fermionic circuits with different geometries; here the free fermionic nature is supported by numerical
data.
{Finally, in Section \ref{sec:FP} we treat the generalization of the FFD algebra~\cite{sajat-FP-model} and the free-fermionic model built on that algebra.
We factorize the transfer matrix obtained in \cite{sajat-FP-model}, i.e., we represent the transfer matrix as a product of local gates in the generalized FFD algebra.
Then we compute a unitary quantum circuit also for this model.}
Finally, our conclusions are presented in Section \ref{sec:concl}, and a few details
about the construction of the free fermionic operators are provided in Appendix \ref{sec:aferm}.
{
Appendix~\ref{sec:appnum} contains the numerical data supporting the conjecture in Section~\ref{sec:numerics}. Appendix~\ref{app:FPproof} provides a detailed proof of the factorization of the transfer matrix for the generalized model~\cite{sajat-FP-model}.
}

{For the convenience of the reader we summarize here the most important new results of this work: 
\begin{itemize}
\item In Section \ref{sec:ffdc1} we derive a unitary staircase circuit for the FFD model, building on the results of
  \cite{fendley-fermions-in-disguise}. 
\item In Section \ref{sec:numerics} we obtain new free fermionic quantum circuits in the FFD algebra, including a circuit
  with finite depth.
\item In Section \ref{sec:FP}, we derived the factorized form of the transfer matrix obtained in \cite{sajat-FP-model}, from which we obtain a free fermionic staircase circuit in the generalized FFD algebra.
\end{itemize}
}

\section{The problem}

\label{sec:problem}

We are dealing with quantum spin systems made out of qubits, and we consider one dimensional systems (spin
chains). In all of our examples $L$ will denote the number of sites of the model, which means that the Hilbert space is
$\HH= (\complex^2)^{\otimes L}$. Pauli operators acting on site $j$ will be denoted as $X_j, Y_j, Z_j$. 

In the following we first discuss the definitions and the basic properties of the free fermionic models, both in the
Hermitian and the unitary (quantum circuit) situation. Afterwards in Subsection \ref{sec:posing} we pose our main
problem. 

\subsection{Free fermionic Hamiltonians}

We consider Hermitian Hamiltonians $H$ acting on the Hilbert space. We are interested in those models which have a
free fermionic structure, either obvious or hidden. The free fermions can be introduced as creation/annihilation
operators for the free eigenmodes, satisfying the standard fermionic exchange relations. However, if these structures are
hidden, one needs to 
look for more accessible properties of the models. Therefore, we approach the problem through the spectrum of the
respective models.

We say that a certain Hamiltonian has a free
fermionic spectrum in a finite volume $L$, if the following are satisfied:
\begin{enumerate}
\item 
There are $n$ real numbers $\eps_k$, $k=1,\dots,n$ such that the
distinct eigenvalues $\lambda$ are of the form
\begin{equation}
  \lambda=\sum_{k=1}^n \pm \eps_k.
\end{equation}
\item For generic values of $\eps_k$ every such eigenvalue has the same degeneracy in a chosen volume $L$.
\end{enumerate}
The first criterion implies that the signs can be chosen independently, and the choices do not affect the values of the
$\eps_k$. If the $\eps_k$ are generic numbers, then this gives $2^n$ distinct eigenvalues; the number $n$ is  seen
as the number of fermionic eigenmodes in the given volume $L$.

The second criterion might not be obvious at first sight. After all, in many familiar models the typical degeneracies
are 1, 2 and 4 (often related to spin and space reflection symmetries), and the $\eps_k$ might satisfy some additional
relations, eventually leading to further degeneracies which often depend on the energy of the level. However, if the
geometric symmetries are broken, then typically there is no remaining special relation between the $\eps_k$, and the
degeneracies become identical for each level. In standard examples this remaining degeneracy is typically just one (or a
small constant which is volume 
independent). In contrast, in the recently discovered models there is a large uniform degeneracy which grows exponentially
with the volume \cite{fendley-fermions-in-disguise,sajat-FP-model}.

These two criteria guarantee the existence of fermionic creation/annihilation operators,
satisfying the canonical commutation relations. If every eigenvalue is simply degenerate, then the fermionic operators are
given simply by the ladder operations, with the addition of some properly chosen sign factors. The situation is more
complicated  in the presence of degeneracies, because then there is a large gauge freedom for the choice of the
fermionic operators. This is discussed in detail in Appendix \ref{sec:aferm}.

Standard examples for free fermionic models are the spin chains solvable by the Jordan-Wigner
transformation, in the case of open boundary conditions. Additional examples are the recently discovered models
\cite{fendley-fermions-in-disguise,sajat-FP-model} together with the earlier models  \cite{cooper-anyon,gyuri-susy-1}
and many other examples introduced in  \cite{fermions-behind-the-disguise,unified-graph-th}.

Our definition of a free fermionic spectrum is strict: it excludes models which split into different sectors, such that
the spectrum is free within each sector individually. Examples for this include the Jordan-Wigner solvable spin chains with
periodic boundary conditions (which split into two sectors corresponding to the fermionic parity) and Kitaev's honeycomb
model (which splits into exponentially many symmetry sectors)  \cite{kitaev-honeycomb}. We chose the restricted
definition, because we are mainly interested in the circuits built for the models of the papers
\cite{fendley-fermions-in-disguise,sajat-FP-model,cooper-anyon,gyuri-susy-1}, which fall into this restricted category.

All of our models are defined with open boundary conditions.
As mentioned above, the Jordan-Wigner solvable models with periodic boundary conditions belong to a different, wider
class of free models. The situation is worse in the recently
discovered models of \cite{fendley-fermions-in-disguise,sajat-FP-model}:
 it is known that the models
are integrable in the periodic case, but the free fermionic nature appears completely broken. At present it is
not known how to compute the spectrum with periodic boundary conditions.

To summarize the situation, we apply the  strict definition of a free fermionic spectrum, and all our models will be
defined with open boundary conditions. 

\subsection{Free fermionic unitary operators}

A natural question is how to extend the free fermionic structures to unitary operators. Eventually we will be
interested in unitary operators that are constructed out of local quantum gates acting on the Hilbert space of the spin
chain. This is motivated by real world experiments with quantum computers.
However, for the moment let us give a more general definition for free fermionic unitary operators.

We say that the unitary operator $\VV$ acting on the Hilbert space $\HH$ has a free fermionic spectrum, if
\begin{enumerate}
\item There are $n$ real numbers $\eps_k$, $k=1,\dots,n$ such that the 
distinct eigenvalues $\lambda$ are of the form 
\begin{equation}
  \label{fermioniclambda}
  \lambda=\prod_{j=1}^n e^{\pm i\eps_k}.
\end{equation}
\item If the $\eps_k$ are generic numbers, then every eigenvalue has the same degeneracy.
\end{enumerate}
Once again, the signs can be chosen independently, and these choices do not affect the values of the $\eps_k$. Note that
the $\eps_k$ are defined only up to $2\pi$, therefore they are often called quasi-energies. 

The free fermionic property can be formulated also on the level of the operators. The eigenvalues
\eqref{fermioniclambda} imply the existence of operators $\ZZ_k$ with the properties that
\begin{equation}
  (\ZZ_k)^2=1,\qquad [\ZZ_k,\ZZ_j]=0,
\end{equation}
and the factorization into ``separated variables''
\begin{equation}
  \VV=\prod_{j=1}^n  e^{i\eps_k \ZZ_k}=\prod_{j=1}^n  (\cos(\eps_k)+i\sin(\eps_k) \ZZ_k).
\end{equation}
Applying the arguments of Appendix \ref{sec:aferm} we see that the operators $\ZZ_k$ can be identified with
\begin{equation}
  \ZZ_k=[\Psi_k,\Psi_{-k}],
\end{equation}
with the $\Psi_k$ and $\Psi_{-k}=(\Psi_k)^\dagger$ being canonical free fermionic operators. The fermionic operators are
unique if there are no degeneracies in the spectrum, otherwise there is a gauge freedom associated with them; this is
discussed in detail in Appendix \ref{sec:aferm}.

A formal connection between the Hamiltonians and the unitary operators is given by the exponentiation. If $H$ is a
Hermitian operator with a free fermionic spectrum, then clearly
\begin{equation}
  \label{UH}
  \VV=e^{-i Ht}
\end{equation}
is a unitary operator with free fermionic spectrum for any real $t$. However, even if $H$ is sufficiently local in
space, its exponentiated version includes very non-local terms, and there is no way to reproduce $\VV$ with local
quantum operations.

Below we are interested in a ``free fermionic Trotterization'' of a free fermionic Hamiltonian. The idea is to start
with a free fermionic Hamiltonian $H$ and to construct an approximation of \eqref{UH} using only sufficiently local
unitary quantum gates. This can be understood as a discretization of the continuous time parameter $t$. Most
discretization schemes would break the free fermionic spectrum, and we are interested in those ones that
keep the free fermionic nature without discretization errors.

Below we formulate the problem in a somewhat more general way, without referring to the goal of discretization.

\subsection{Posing the problem}

\label{sec:posing}

Let us assume we have a family of Hamiltonians acting on $\HH$, expressed as
\begin{equation}
  \label{Hdef}
  H=\sum_{j=1}^M \alpha_j h_j,\qquad \alpha_j\in\valos
\end{equation}
where the $h_j$ are Pauli strings (products of Pauli operators acting on selected spins). In most of the examples the
$h_j$ are well localized operators, which act on a few neighbouring sites. The number $M$ stands for the number of the
``fundamental operators'' of the Hamiltonian and  typically it is related to the number of sites $L$.

For each $h_j$ we construct a unitary gate with an angle $\varphi_j\in\valos$:
\begin{equation}
  \label{udef}
  u_j({\varphi_j})=e^{i\varphi_j h_j}=\cos(\varphi_j)+i\sin(\varphi_j) h_j.
\end{equation}
{We note that the dependence of $\varphi_j$ on $u_j(\varphi_j)$ is often omitted in the following for simplicity.}
We then construct various quantum circuits as products of these local gates.  Let
$\SSS=\{s_1,s_2,\dots,s_N\}$ be a 
string of indices, with length $N$, such that $s_k\in [1,M]$. Repetitions of indices are allowed, and generally we
expect that $N$ will grow linearly with $M$. Then we take the product
\begin{equation}
  \label{productdef}
  \VV=u_{s_N}\dots u_{s_2}u_{s_1}.
\end{equation}
We call the sequence of operators in the product the ``geometry of the circuit''. For a graphical representation of such
circuits see Fig. \ref{fig:cggt}.

A few fundamental
properties of the spectrum of such $\VV$ operators are established later in Section \ref{sec:prop1}.

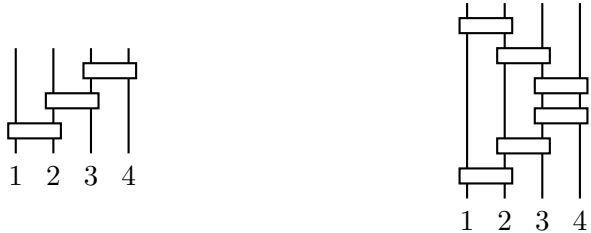
\begin{figure}[t]
  \centering
  \begin{tikzpicture}
    \draw [thick] (0,0) -- (0,1.4);
    \draw [thick] (0.5,0) -- (0.5,1.4);
    \draw [thick] (1,0) -- (1,1.4);
         \draw [thick] (1.5,0) -- (1.5,1.4);
       \draw [thick, fill=white] (-0.1,0.2)  rectangle (0.6,0.4);
       \draw [thick, fill=white] (0.4,0.6)  rectangle (1.1,0.8);
       \draw [thick, fill=white] (0.9,1)  rectangle (1.6,1.2);
       \node at (0,-0.3) {$1$};
       \node at (0.5,-0.3) {$2$};
       \node at (1,-0.3) {$3$};
            \node at (1.5,-0.3) {$4$};

            \begin{scope}[xshift=6cm,yshift=-0.6cm]
    \draw [thick] (0,0) -- (0,2.6);
    \draw [thick] (0.5,0) -- (0.5,2.6);
    \draw [thick] (1,0) -- (1,2.6);
         \draw [thick] (1.5,0) -- (1.5,2.6);
       \draw [thick, fill=white] (-0.1,0.2)  rectangle (0.6,0.4);
       \draw [thick, fill=white] (0.4,0.6)  rectangle (1.1,0.8);
       \draw [thick, fill=white] (0.9,1)  rectangle (1.6,1.2);
       \draw [thick, fill=white] (0.9,1.4)  rectangle (1.6,1.6);
       \draw [thick, fill=white] (0.4,1.8)  rectangle (1.1,2);
       \draw [thick, fill=white] (-0.1,2.2)  rectangle (0.6,2.4);

       \node at (0,-0.3) {$1$};
       \node at (0.5,-0.3) {$2$};
       \node at (1,-0.3) {$3$};
       \node at (1.5,-0.3) {$4$};
       \end{scope}
   \end{tikzpicture}
  \caption{Pictorial representation for two simple quantum circuits acting on $L=4$ spins. The vertical lines stand for
    the spins, and it 
    is understood that  time flows upwards. Here each gate $u_j$ acts on two neighbouring spins at sites $j$ and $j+1$.
    The circuit on the left is $\VV=u_3u_2u_1$, while the circuit on the
    right is $\VV=u_1u_2u_3u_3u_2u_1$.}
  \label{fig:cggt}
\end{figure}

It is important that if a certain gate $u_j$ is applied multiple times,
then it should always have the same angle $\varphi_j$. This is the only fine tuning that we require in the problem, and
later we will see that in most cases this fine tuning is indeed essential.

\bigskip

In the original model of Fendley and in the models treated with the graph theoretical framework of
\cite{fermions-behind-the-disguise,unified-graph-th} the Hamiltonian has a free fermionic spectrum for
any choice of the coefficients $\alpha_j$. In such models it is natural to ask the 
following question.

\bigskip

{\bf Question 1a:} For which choice of indices $\SSS$ is the resulting unitary operator $\VV$ free fermionic, {\it for any choice
  of the $\varphi_j$ parameters}? In other words: which geometry of a quantum circuit will keep the free fermionic
nature of the original Hamiltonian?

\bigskip

There are other types of models, which have hidden free fermions only in those cases, when a certain degree
of  fine tuning is applied to the couplings $\alpha_j$ in the Hamiltonian. An example is the model discovered in \cite{sajat-FP-model}.
In such cases the relevant question is:

\bigskip

{\bf Question 1b:} For which choice of indices $\SSS$ and for which angles $\varphi_j$ is 
the resulting unitary operator $\VV$ free fermionic?

\bigskip

These questions can be investigated for concrete examples with small $M, N$, but eventually we would like to approach
large systems as well. Therefore, we are especially interested in those geometries which can be defined for a set of
increasing values of $M$ and $N$. In certain cases we obtain the so-called ``brickwork circuits'', which can be seen as
a particular prescription for Floquet time evolution. The idea is to compose the circuit out of finite number of ``time
steps'', such that in each time step we apply a sequence of gates, all of which commute with each other. Gates that
appear in different time steps are generally not commuting, leading to the actual dynamics in the problem. 

If the $h_j$ operators involve only nearest neighbour interactions (acting on sites $j$ and $j+1$) , then the the
standard brickwork circuit in a finite volume $L=2k$ is constructed as
\begin{equation}
  \VV=(u_2u_4\dots u_{L-2})(u_1u_3\dots u_{L-1}).
\end{equation}
Here the parentheses are added only for better readability. It follows from the nearest neighbour property that the $u_k$
operators commute with each other within each parenthesis, but generally they are not commuting. For a graphical
representation of this circuit see Fig. \ref{fig:cbw}. Brickwork circuits for more general geometries will be presented
later in Section \ref{sec:numerics}; an example can be seen in Fig.  \ref{fig:cbw3}.

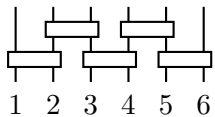
\begin{figure}[t]
  \centering
  \begin{tikzpicture}
    \draw [thick] (0,0) -- (0,1);
    \draw [thick] (0.5,0) -- (0.5,1);
    \draw [thick] (1,0) -- (1,1);
    \draw [thick] (1.5,0) -- (1.5,1);
    \draw [thick] (2,0) -- (2,1);
          \draw [thick] (2.5,0) -- (2.5,1);
          \draw [thick, fill=white] (-0.1,0.2)  rectangle (0.6,0.4);
          \draw [thick, fill=white] (0.9,0.2)  rectangle (1.6,0.4);
          \draw [thick, fill=white] (1.9,0.2)  rectangle (2.6,0.4);
          
          \draw [thick, fill=white] (0.4,0.6)  rectangle (1.1,0.8);
            \draw [thick, fill=white] (1.4,0.6)  rectangle (2.1,0.8);
       \node at (0,-0.3) {$1$};
       \node at (0.5,-0.3) {$2$};
       \node at (1,-0.3) {$3$};
       \node at (1.5,-0.3) {$4$};
       \node at (2,-0.3) {$5$};
           \node at (2.5,-0.3) {$6$};
   \end{tikzpicture}
  \caption{Pictorial representation for a ``brickwork circuit'' acting on $L=6$ spins, with open boundary
    conditions. Here every gate $u_j$ acts on two neighbouring sites and the circuit is $\VV=u_2u_4u_1u_3u_5$.}
  \label{fig:cbw}
\end{figure}

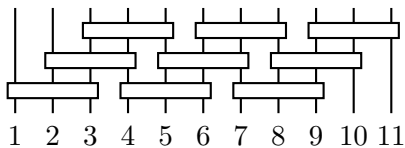
\begin{figure}[t]
  \centering
  \begin{tikzpicture}
    \draw [thick] (0,0) -- (0,1.4);
    \draw [thick] (0.5,0) -- (0.5,1.4);
    \draw [thick] (1,0) -- (1,1.4);
    \draw [thick] (1.5,0) -- (1.5,1.4);
    \draw [thick] (2,0) -- (2,1.4);
    \draw [thick] (2.5,0) -- (2.5,1.4);
    \draw [thick] (3,0) -- (3,1.4);
    \draw [thick] (3.5,0) -- (3.5,1.4);
    \draw [thick] (4,0) -- (4,1.4);
    \draw [thick] (4.5,0) -- (4.5,1.4);
    \draw [thick] (5,0) -- (5,1.4);
        
          \draw [thick, fill=white] (-0.1,0.2)  rectangle (1.1,0.4);
          \draw [thick, fill=white] (1.4,0.2)  rectangle (2.6,0.4);
          \draw [thick, fill=white] (2.9,0.2)  rectangle (4.1,0.4);
          
          \draw [thick, fill=white] (0.4,0.6)  rectangle (1.6,0.8);
          \draw [thick, fill=white] (1.9,0.6)  rectangle (3.1,0.8);
          \draw [thick, fill=white] (3.4,0.6)  rectangle (4.6,0.8);

                   \draw [thick, fill=white] (0.9,1)  rectangle (2.1,1.2);
                   \draw [thick, fill=white] (2.4,1)  rectangle (3.6,1.2);
                     \draw [thick, fill=white] (3.9,1)  rectangle (5.1,1.2);
       \node at (0,-0.3) {$1$};
       \node at (0.5,-0.3) {$2$};
       \node at (1,-0.3) {$3$};
       \node at (1.5,-0.3) {$4$};
       \node at (2,-0.3) {$5$};
       \node at (2.5,-0.3) {$6$};
       \node at (3,-0.3) {$7$};
       \node at (3.5,-0.3) {$8$};
       \node at (4,-0.3) {$9$};
       \node at (4.5,-0.3) {$10$};
        \node at (5,-0.3) {$11$};
   \end{tikzpicture}
  \caption{A generalized ``brickwork circuit'' acting on $L=11$ spins, with open boundary
    conditions. Here the gates $u_j$ act on three neighbouring sites $j, j+1, j+2$, and the circuit is
    $\VV=(u_3u_6u_9)(u_2u_5u_8)(u_1u_4u_7)$. Parentheses are added for better readability, and to signal the groups of
    gates which are applied in a given ``layer'' or ``time step''.}
  \label{fig:cbw3}
\end{figure}

\bigskip

Let us now discuss the algebraic background for the problem.

For a certain $M$ we define the {\it bond algebra} as the
operator algebra generated by the $h_j$, $j=1,\dots,M$. This is a finite dimensional algebra, and it is a sub-algebra of
the full operator algebra of the spin chain of length $L$.

Every unitary operator of the form \eqref{productdef} is a member of the bond algebra.
Once we expand the product in  \eqref{productdef} using \eqref{udef} we typically obtain a high order polynomial in
the $h_k$.

Let us now investigate the limit of small angles $\varphi_k$. In this case the quantum circuit $\VV$ can be regarded as a 
discretization of the unitary operator  
\begin{equation}
  e^{-iH \Delta t},
\end{equation}
where $\Delta t$ is a small number (compared to the energy scales of $H$).
In such a case we can expand the local unitary
operators as
\begin{equation}
  u_k\approx 1+i\varphi_k h_k.
\end{equation}
Let us assume that each index $k$ appears in $\SSS$ the same number of times {(the number of repetitions is given by
$\rho=N/M$),}
and let us substitute
\begin{equation}
  \label{dtlimit}
\varphi_j=-\alpha_j \frac{\Delta t}{\rho},\qquad \Delta t\to 0.
\end{equation}
 Then we get to leading order in $\Delta t$:
\begin{equation}
  \VV\approx 1-iH \Delta t,\qquad H=\sum_k  \alpha_k h_k.
\end{equation}
This means that small angles lead to a spectrum which is  {\it approximately free fermionic for any choice of
  the geometry}. In this limit
the unitary operator commutes with $H$, which is linear in the $h_k$. However, discretization effects (higher order
terms in the angles) would very often destroy the free fermionic nature.

If we expand the product \eqref{productdef} in the bond algebra, we can always take the formal logarithm of the
operator $\VV$. This will result in a formal power series in the angles $\varphi_k$, with some
radius of convergence. If a certain choice of angles is within the radius of convergence, then we obtain a well defined
Hermitian operator $H^*$ such that
\begin{equation}
  \VV=e^{i H^*}.
\end{equation}
In certain concrete cases $H^*$ can be found explicitly as a function of the angles $\varphi_k$. Then a free fermionic
circuit will lead to a free fermionic Hermitian $H^*$.  Generally $H^*$ will involve terms that are
increasingly non-local (high order in the $h_k$), and in the general case finding $H^*$ is not  easier
than actually computing the spectrum of the unitary circuit.

Experience with the free fermionic Hamiltonians and with  integrable quantum circuits in general suggest
that every free fermionic circuit
will have a set of commuting local charges. It is expected that
these charges will be low order polynomials in the bond algebra. This leads us to the following question:

\bigskip

{\bf Question 2:} If a certain circuit is free fermionic, are there Hermitian operators
commuting with $\VV$, which are low order polynomials in the $h_k$? What is the lowest order?

\bigskip

Below we will see that there exist free fermionic unitary circuits for which the lowest order commuting charge is
actually a Hamiltonian of the form \eqref{Hdef}.
In such a case there is a direct mapping between the set of angles $\{\varphi_j\}$ and the parameters $\{\alpha_j\}$ of the
Hamiltonian, {which reduces to \eqref{dtlimit} in the infinitesimal angle limit}. However, generally it is not possible to find a Hamiltonian of the form \eqref{Hdef} commuting with $\VV$,
and the ``smallest'' commuting charge emerges at a slightly higher order.

\subsection{Basic properties of the circuits}

\label{sec:prop1}

Here we establish a few simple properties of the circuits $\VV$ and their spectrum.

\medskip

{\bf Property 1.} The spectrum of $\VV$ is invariant with respect to a cyclic shift of the operator product \eqref{productdef},
which can also be described by a cyclic shift of the index set $\SSS$. 

\medskip

{\bf Property 2.} If for some $k$ we have the commutativity $[h_{s_{k+1}},h_{s_k}]=0$, then this  extends to the
commutativity $[u_{s_{k+1}},u_{s_k}]=0$, which implies that the order of the products (or the index set $\SSS$) can be
rearranged accordingly.

\medskip

{\bf Property 3.} A directly repeated application of the same gate is equivalent to a single application, with regards
to the free fermionic property. For example, if there is a repeated action of the gate $u_1$ in the operator product,
then we can replace
{\begin{align}
  u_1{(\varphi_1)}u_1{(\varphi_1)}
  &=
  e^{i\varphi_1 h_1}e^{i\varphi_1 h_1}=e^{i2\varphi_1 h_1}
  \nonumber\\
  &=
{u_1(2\varphi_1)}.
\end{align}}
This corresponds to the action of a single $u_1$, but now with the doubled angle.

\medskip

Sometimes the combination of these properties will lead to simplifications of the problem. {For example,
consider the free fermionic nature of the the circuit
\begin{equation}
 {u_1(\varphi_1)u_2(\varphi_2)u_3(\varphi_3)u_2(\varphi_2)u_1(\varphi_1)}
\end{equation}
for some triplet of operators $h_1, h_2, h_3$, such that $[h_3,h_2]=0$. Then by Property 1. and 2. we
are free to investigate the circuit {$u_3(\varphi_3)u_2(\varphi_2)u_2(\varphi_2)u_1(\varphi_1)u_1(\varphi_1)$}, and by Property 3. we can simplify it and investigate {$u_3(\varphi_3)u_2(2\varphi_2)u_1(2\varphi_1) = u_3(\varphi^\prime_3)u_2(\varphi^\prime_2)u_1(\varphi^\prime_1)$} instead. }

For quantum circuits with nearest neighbour two-site gates the spectral equivalences under the re-ordering of the gates
were investigated recently in \cite{znidaric-classification}. For gates acting on more than two sites we are not aware
of such a study. 

\section{Jordan-Wigner diagonalizable models}

\label{sec:JW}

Now we consider models that can be diagonalized with a Jordan-Wigner transformation. We consider an abstract formulation
of the transformation, which does not depend on the concrete representation. The class of models for which such a
transformation exists was determined in the recent works \cite{chapman-jw,japan-free-fermion-JW}.

Let us assume that there are Majorana fermionic operators $\chi_m$, $m=1,\dots,N_\chi$, such that
\begin{equation}
  \label{chidef}
  \chi_m^2=1,\qquad \{\chi_m,\chi_n\}=\delta_{m,n}
  \,,
\end{equation}
and every $h_j$ is bilinear in the $\chi_m$.

\begin{thm}
  If every $h_j$ is bilinear in the $\chi_m$, then the Hamiltonians \eqref{Hdef} are free fermionic. 
\end{thm}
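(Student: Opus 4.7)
The plan is to reduce the claim to the standard spectral theory of quadratic Majorana Hamiltonians. First I would combine the expansions $h_j=\sum_{m,n}c^{(j)}_{mn}\chi_m\chi_n$ and use the Clifford relations of \eqref{chidef} to rewrite the full Hamiltonian (absorbing the diagonal $\chi_m^2=1$ contributions into an additive constant) in the canonical antisymmetric form
\begin{equation}
  H = i\sum_{m<n} A_{mn}\,\chi_m\chi_n,
\end{equation}
where the entries of $A$ are explicit linear combinations of the couplings $\alpha_j$, and Hermiticity of $H$ together with Hermiticity of the $\chi_m$ forces $A$ to be real and antisymmetric.

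Next I would apply the real canonical form for antisymmetric matrices: there exists an orthogonal matrix $O$ such that $O A O^{T}$ is block-diagonal with $2\times 2$ antisymmetric blocks of magnitudes $\eps_1,\dots,\eps_n$ together with a possible zero kernel. Defining rotated Majoranas $\tilde\chi_m=\sum_{n'}O_{mn'}\chi_{n'}$ preserves the Clifford relations \eqref{chidef} because $O$ is orthogonal, and pairing them into canonical modes $c_k=(\tilde\chi_{2k-1}+i\tilde\chi_{2k})/2$ brings $H$ to the diagonal form $H=\sum_{k=1}^{n}\eps_k(2 c_k^\dagger c_k-1)$. Criterion~1 of a free fermionic spectrum then follows immediately, since for generic $\eps_k$ the distinct eigenvalues are exactly the $2^n$ sign combinations $\sum_k\pm\eps_k$.

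The real obstacle, and the step I would handle with the most care, is criterion~2, namely the uniform degeneracy of every eigenvalue. The subtlety is that the Clifford algebra generated by $\chi_1,\dots,\chi_{N_\chi}$ need not act irreducibly on $\HH=(\complex^2)^{\otimes L}$. Here I would invoke the representation theory of finite-dimensional Clifford algebras: any finite-dimensional representation of the even Clifford algebra is a direct sum of copies of the unique irreducible representation of dimension $2^n$, so $\HH\cong\complex^{2^n}\otimes\complex^d$ with $d=\dim\HH/2^n$, and $H$ acts as the diagonalized free fermionic operator on the first tensor factor and as the identity on the second. Each of the $2^n$ distinct generic eigenvalues therefore inherits the same multiplicity $d$, which is criterion~2. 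The odd case $N_\chi=2n+1$ can be reduced to the even one by adjoining the central Majorana product $\chi_1\chi_2\cdots\chi_{N_\chi}$, or by coupling in one auxiliary Majorana that drops out of $H$.
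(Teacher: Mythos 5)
Your proposal is correct and follows the standard route that the paper itself merely cites: rewriting $H$ in the canonical form $i\sum_{m<n}A_{mn}\chi_m\chi_n$ with $A$ real antisymmetric and block-diagonalizing $A$ by an orthogonal rotation of the Majoranas is precisely the ``commutation relations of fermionic bilinears'' argument that the paper delegates to its reference on free parafermions. The only place you go beyond the paper is criterion~2, and rightly so: since the paper's definition of a free fermionic spectrum demands \emph{uniform} degeneracy and the representation of the $\chi_m$ on $\HH$ need not be irreducible, your appeal to the simplicity of the even Clifford algebra (so that $\HH\cong\complex^{2^n}\otimes\complex^{d}$ and every generic eigenvalue $\sum_k\pm\eps_k$ inherits the same multiplicity $d$), together with the reduction of the odd case, is a necessary completion of the argument rather than a deviation from it.
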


This follows simply from the commutation relations of fermionic bilinears. For a concise treatment see for example
\cite{free-parafermion}. 

For the Jordan-Wigner diagonalizable models a simple theorem holds also for the quantum circuits:

\begin{thm}
   If every $h_j$ is bilinear in the $\chi_m$, then every quantum circuit of the form \eqref{productdef} is free fermionic.
\end{thm}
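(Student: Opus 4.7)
The plan is to exploit the defining property of Majorana bilinears: conjugation by any $u_j$ preserves the linear span of the $\chi_m$. I would first establish this as a lemma via a standard Baker--Campbell--Hausdorff argument. Writing $h_j = \tfrac{i}{2}\sum_{m,n} A^{(j)}_{mn}\chi_m\chi_n$ with antisymmetric real $A^{(j)}$, one checks that $[h_j,\chi_m]$ is again a real linear combination of the $\chi_n$, so that
\begin{equation}
u_j\,\chi_m\, u_j^\dagger = \sum_n O^{(j)}_{mn}\,\chi_n,\qquad O^{(j)}\in SO(N_\chi).
\end{equation}
Composing over the string $\SSS$, conjugation by $\VV$ itself acts on the Majorana vector space as an orthogonal matrix $O = O^{(s_N)}\cdots O^{(s_1)}\in SO(N_\chi)$.

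Second, I would apply the real normal form of orthogonal matrices: there exists an orthogonal basis change $R$ such that $R^T O R$ is block diagonal with $2\times 2$ rotation blocks of angles $2\eps_k$ (plus a trivial $+1$ block if $N_\chi$ is odd, yielding a zero mode). Setting $\tilde\chi_m = \sum_n R_{nm}\chi_n$, these remain Majoranas since $R$ is orthogonal, and $\VV$ rotates each pair $(\tilde\chi_{2k-1},\tilde\chi_{2k})$ by $2\eps_k$. Define the commuting involutions $\ZZ_k = -i\tilde\chi_{2k-1}\tilde\chi_{2k}$; they satisfy $\ZZ_k^2=1$, $[\ZZ_k,\ZZ_j]=0$, and are fixed by conjugation with $\VV$.

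Third, consider $W = \prod_k \bigl(\cos\eps_k + i\sin\eps_k\,\ZZ_k\bigr)$. A direct computation shows $W$ induces on every Majorana the same orthogonal rotation as $\VV$. Therefore $\VV W^{-1}$ commutes with the entire Clifford algebra generated by the $\chi_m$, and on each irreducible Clifford component is a scalar $e^{i\varphi_0}$. Absorbing this phase as a uniform shift of the quasi-energies, the spectrum of $\VV$ takes the form $\prod_k e^{\pm i\eps_k}$ as required by the first criterion. The second criterion (uniform degeneracy) follows because the joint eigenspaces of the mutually commuting $\ZZ_k$ on an irreducible Clifford representation all carry the same dimension, and any multiplicity coming from reducibility of the Clifford action on $\HH$ factors out uniformly across all sign choices.

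The main obstacle is the bookkeeping around the global phase $e^{i\varphi_0}$ and any reducibility of the Clifford representation on $\HH$: one must check that the scalar is the same on all irreducible summands (which it is, because $\VV$ is built from the $h_j$ and thus respects the decomposition) and that the resulting multiplicities are level-independent for generic $\eps_k$. The parity of $N_\chi$ needs only a cosmetic remark: an odd number of Majoranas contributes a zero mode that doubles every eigenvalue uniformly without altering the $\{\eps_k\}$.
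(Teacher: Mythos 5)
Your route is genuinely different from the paper's. The paper proves the theorem by induction on the length of the product: using the Baker--Campbell--Hausdorff series and the fact that commutators of Majorana bilinears are again bilinears, it concludes that $\VV=e^{iO_N}$ with $O_N$ quadratic, and then appeals to the Hamiltonian case (Theorem 1). You instead pass to the adjoint action on the span of the $\chi_m$, bring the resulting orthogonal matrix to real normal form, and reconstruct $\VV$ as a product of commuting single-mode rotations. Your approach buys something real: it avoids any worry about convergence of the BCH series (which the paper's proof, taken literally as an ``infinite sum of nested commutators'', does not address), and it produces the separated-variable form $\prod_k e^{i\eps_k\ZZ_k}$ explicitly rather than through the intermediate Hamiltonian. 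The price is that the adjoint action forgets precisely the central part of $\VV$, and that is where your argument has a genuine gap.

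The gap is the final step, ``absorbing this phase as a uniform shift of the quasi-energies.'' A global phase $e^{i\varphi_0}$ cannot in general be absorbed into the $\eps_k$ while preserving the form $\prod_k e^{\pm i\eps_k}$ with independently chosen signs: the product of all $2^n$ eigenvalues of a free-fermionic unitary equals $1$, whereas after multiplication by $e^{i\varphi_0}$ it equals $e^{i2^n\varphi_0}$, so any $\varphi_0\notin \tfrac{2\pi}{2^n}\egesz$ (and in fact anything outside $\pi\egesz$, since only a sign can be absorbed via $\eps_1\mapsto\eps_1+\pi$) destroys the free-fermionic form in the paper's strict sense. So you must actually prove the leftover central element is $\pm 1$, not merely name it. There are two clean ways to do this. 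One is the paper's: since the traceless quadratics close under commutators, $\VV=e^{iO}$ with $O$ a \emph{traceless} quadratic, which upon diagonalizing $O$ gives $\VV=\prod_k e^{i\eps_k\ZZ_k}$ with no scalar left over. The other stays inside your framework: both $\VV$ and $W$ lie in the group generated by exponentials of traceless Hermitian quadratics, i.e.\ the image of $\mathrm{Spin}(N_\chi)$ in the spinor representation, and $\VV W^{-1}$ lies in the kernel of the covering map onto $SO(N_\chi)$, which is exactly $\{\pm1\}$. Either argument also disposes of your worry about reducibility and, for odd $N_\chi$, of the fact that the center of the Clifford algebra is larger than the scalars (it contains $\chi_1\cdots\chi_{N_\chi}$), so that ``commutes with all $\chi_m$'' does not by itself imply ``is a scalar'' --- a second point your parenthetical justification passes over too quickly. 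With the kernel argument inserted, your proof is complete.
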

\begin{proof}
 This can be proven by induction over the length of the operator product. We assume that up to a certain length $\ell$ the
product can be written as
\begin{equation}
  e^{i O_\ell},
\end{equation}
where $O_\ell$ is bilinear in the $\chi_j$. This is true for the initial point of the induction with a single $u_j$.
Then we evaluate the next product, and we compute
\begin{equation}
O_{\ell+1}=\log(u_{\ell+1}e^{iO_{\ell}})=\log(e^{i \varphi_{\ell+1} h_{\ell+1}} e^{iO_\ell})
\end{equation}
using the Baker-Campbell-Hausdorff formula. We will end up with an infinite sum of nested commutators. However, at every
step we are computing a commutator of fermionic bilinears, therefore  every intermediate step results in terms bilinear
in fermions. This implies that the resulting operator $O_{\ell+1}$ will be also bilinear in the Majorana fermions.
\end{proof}

\subsection{The quantum Ising chain}

\label{sec:ising}

Let us consider a specific example: the algebra of the quantum Ising chain.

Let $h_k$ with $k=1,\dots,M$ be Hermitian operators satisfying the relations
\begin{equation}
  \label{isingalg}
  \begin{split}
    h_k^2&=1,\qquad k=1,\dots,M\\
    \{h_k,h_{k+1}\}&=0,\qquad k=1,\dots,M-1\\
    [h_k,h_\ell]&=0,\qquad |k-\ell|>1.
     \end{split}
\end{equation}

A representation of this algebra is given by the quantum Ising chain. For example, for odd $M$ the standard
representation is on a chain of length $L=(M+1)/2$
\begin{equation}
  \label{isingrep}
  \begin{split}
    h_{2k-1}&=Z_{k},\quad  k=1,\dots,L\\
    h_{2k}&=X_{k}X_{k+1},\quad k=1,\dots,L-1.
     \end{split}
\end{equation}
An alternative representation can be given in a homogeneous way on a spin chain of length $L=M+1$ by
\begin{equation}
  h_k=X_kY_{k+1},\qquad k=1,\dots,N.
\end{equation}

It is well known that the resulting spin chain Hamiltonians can be solved by Majorana fermions, that arise from the
standard Jordan-Wigner transformation. Introducing the
operators $\chi_m$ we write
\begin{equation}
  h_m=i\chi_m\chi_{m+1},\qquad m=1,\dots,M.
\end{equation}
We see that the defining relations are satisfied. The representation of the Majorana fermions in terms of the spin
operators might depend on the representation of the $h_j$. In the example of \eqref{isingrep} we can choose
\begin{equation}
  \chi_{2k-1}=\left(\prod_{\ell=1}^{k-1} Z_\ell\right)Y_{k},\qquad \chi_{2k}= \left(\prod_{\ell=1}^{k-1} Z_\ell\right)X_{k}.
\end{equation}

Let us now build a brickwork circuit using this algebra. For example we choose $M=2L-1$ with some $L$ and define
\begin{equation}
  \label{ising1}
 \VV=(u_M\dots u_3u_1)(u_{M-1}\dots u_4u_2).
\end{equation}
The operator products commute in each parenthesis. Performing a homogeneous rotation in the concrete representation
\eqref{isingrep} we obtain the Floquet 
time evolution operator of the ``kicked Ising model'' at the free fermionic point
\cite{kicked-ising-eredeti,kicked-Ising-space-time-duality-1,dual-unitary-1,dual-unitary-2}:
\begin{equation}
  \VV=\exp\left(i\sum_{k=1}^L c_k X_k\right) \cdot \exp\left(i\sum_{k=1}^{L-1}  d_k Z_{k}Z_{k+1}\right)
\end{equation}
with the identification of the angles $c_k=\varphi_{2k-1}$, $d_k=\varphi_{2k}$. This operator can be seen as a complex valued
and inhomogeneous generalization of the transfer matrix of the two dimensional Ising model, solved by free fermions in
the historic work 
\cite{Schulz-Mattis-Lieb}. 

It is known that this is a free fermionic circuit. Here we do not discuss its explicit solution, instead we focus on the
conserved charges of the circuit. Direct computation using the algebra \eqref{isingalg} shows that the operator $\VV$
commutes with the extensive Hermitian charge
\begin{equation}
  \label{Qising}
  \begin{split}
    Q=&\sum_{j=1}^M  \sin(2\varphi_{j})(\cos(2\varphi_{j-1})+\cos(2\varphi_{j+1})) h_j+\sum_{j=1}^{M-1}  (-1)^{j+1} i \sin(2\varphi_j)\sin(2\varphi_{j+1}) h_j h_{j+1},
  \end{split}
\end{equation}
where it is understood that $\varphi_0=\varphi_{M+1}=0$. Note that this operator has quadratic terms in the $h_j$, but
they are still bilinear in the fermions, because $h_jh_{j+1}=\chi_j\chi_{j+2}$. 

In this example there is no specialization of the Hamiltonian \eqref{Hdef} which would commute with
the circuit $\VV$, but the low order charge $Q$ does commute. This pattern will be observed also in the circuits with
hidden free fermions: if the geometry is regular, then 
there always exist low order commuting charges. However, their structure becomes more and more complicated as the layers
(or time steps) of the brickwork circuits are increased. This is discussed in Subsection \ref{sec:commc}.

\section{Free fermions in disguise}

\label{sec:ffd}

In this Section we consider the model introduced in \cite{fendley-fermions-in-disguise}. This model is built on the
abstract algebra defined below, which we call the FFD algebra (Free Fermions in Disguise).
 A generalization of this
model, which is built on a modified algebra was later discovered in \cite{sajat-FP-model}. We treat 
 that model in Section \ref{sec:FP}. 

It was proven in \cite{fendley-fermions-in-disguise} that the FFD algebra guarantees the free fermionic nature of the
family of Hamiltonians \eqref{Hdef}. In Subsection \ref{sec:reps} we discuss the 
representations of this algebra, in Subsection \ref{sec:ffdtm} we summarize the main results of
\cite{fendley-fermions-in-disguise} and in Section \ref{sec:ffdc1} we
show that the constructions in \cite{fendley-fermions-in-disguise} directly lead to unitary quantum circuits with
free fermionic spectrum. 

The FFD algebra is generated by the Hermitian operators $h_j$, $j=1,\dots,M$ with the property $h_j^2=1$, satisfying the
following commutation relations:
\begin{equation}
  \label{ffdalgebra}
  \begin{split}
    \{h_j,h_{j+1}\}&=0,\quad j=1,\dots,M-1\\
    \{h_j,h_{j+2}\}&=0, \quad j=1,\dots,M-2 \\
       [h_j,h_{k}]&=0, \quad \text{if } |j-k|>2.
  \end{split}
\end{equation}
The FFD algebra with $M$ generators is $2^M$ dimensional over the complex numbers, because every operator product
\begin{equation}
  h_{j_1}h_{j_2}\cdots h_{j_n}
\end{equation}
can be brought to a ``normal ordered'' form, where the indices of the operators become ordered. For concreteness we choose
the convention that a normal ordered product satisfies $j_1<j_2<\dots<j_n$. Then it follows from the normal ordering
(and the absence of additional algebraic relations), that the operator algebra is 
$2^M$ dimensional, with a basis given by
\begin{equation}
\prod_{j=1}^M (h_{j})^{n_j},\qquad n_j=0, 1,
\end{equation}
with the ordering of the product as specified above.

The algebraic relations \eqref{ffdalgebra} might seem surprising or exotic. However, in the next Subsection we show that
for certain 
values of $M$ the FFD algebra is actually identical to the full operator algebra of a spin chain of length
$L=M/2$. Therefore, the defining relations can be seen as a very special choice for the generating elements of the
standard operator algebra.

For later use let us introduce the linear operation $[.]^T$ on the FFD algebra via its action on the basis elements:
\begin{equation}
  \label{transposedef}
  (h_{j_1}h_{j_2}\cdots h_{j_n})^T=h_{j_n}\cdots h_{j_2}h_{j_1}.
\end{equation}
It follows from the commutation relations, that on the basis elements this operation is always equal
to multiplication by 
$\pm 1$. Then we extend its action to the linear combinations of the basis elements. It follows that for any two
operators $A$ and $B$ in the FFD algebra
\begin{equation}
  (AB)^T=B^TA^T
\end{equation}
The operation is very similar to
the usual transpose, and if we are dealing with a representation where each $h_j$ is a symmetric matrix, then the
operation $[.]^T$ is actually identical to the usual transpose. However, there are representations where some of the
$h_j$ are not symmetric, and in those cases we do not get the usual transpose. It is important that $[.]^T$ is not
equivalent to the usual adjoint either: taking the adjoint of an operator is an anti-linear operation, whereas  $[.]^T$
is linear.

Let us also discuss the center of the FFD algebra. Direct computations show that there
are central elements for $M=6k+1$, $M=6k+3$, $M=6k+4$ and $M=6k+5$. We list them below.

For every $M=3k+1$ we find the central element
\begin{equation}
   C^{(a)}=h_1h_4h_7\dots h_{M},
\end{equation}
while for $M=6k+5$ there is a central element given by
\begin{equation}
  C^{(b)}=(ih_2h_3h_4)(ih_8h_9h_{10})\dots (ih_{M-3}h_{M-2}h_{M-1}).
\end{equation}
Here the parentheses are added for better readability, and the factors of $i$ are added to ensure the Hermiticity of the product.
The operator $C^{(b)}$ can be used to find a central element also for $M=6k+4$ and $M=6k+3$, by ``forgetting'' about the
first and/or the last generators of the FFD algebra (which are not present in $C^{(b)}$). This means that for $M=6k+4$ we find two
central elements  
\begin{equation}
  \begin{split}
  C^{(c)}&=(ih_1h_2h_3)(ih_7h_8h_{9})\dots (ih_{M-3}h_{M-2}h_{M-1})\\
   C^{(d)}&=(ih_2h_3h_4)(ih_8h_9h_{10})\dots (ih_{M-2}h_{M-1}h_{M}).
\end{split}
\end{equation}
Their product is also central, in fact we obtain $C^{(c)}C^{(d)}=C^{(a)}$ for the given $M=6k+4$. Finally, for $M=6k+3$
we obtain the central element
\begin{equation}
   C^{(e)}=(ih_1h_2h_3)(ih_7h_8h_{9})\dots (ih_{M-2}h_{M-1}h_{M}).
\end{equation}
  
All these central elements are Hermitian, they square to one, therefore their eigenvalues are $\pm 1$. The eigenvalue
problem of any  operator can be investigated separately in the sectors specified by the eigenvalues of these central
elements. 

Note that there is no central element for $M=6k$ and $M=6k+2$. This has implications for the ``minimal''
representations discussed below.

\subsection{Representations}

\label{sec:reps}

Here we discuss representations of the algebra defined by relations \eqref{ffdalgebra} with $M$ generators.

For the
applications of the algebra (and for numerical and/or algebraic computations) it is useful to find representations on
spin chains with the shortest length possible. 
The operator algebra of a spin-1/2 chain with length $L$ is $4^L$
dimensional, whereas the FFD algebra is $2^M$ dimensional.
This implies that the minimal length of a faithful representation (for even $M$) is $L=M/2$. We call such faithful
representations ``minimal''.
We will see that depending on $M$ it is possible to construct such minimal representations. 

However, first we consider the simplest representation, which was the main example in the paper
\cite{fendley-fermions-in-disguise}. This is given by the following operators acting
on a spin chain with $M+2$ sites:
\begin{equation}
  \label{basicrep}
  h_j=X_jX_{j+1}Z_{j+2}.
\end{equation}
The advantage of this representation is that it is homogeneous in the index $j$. However, this comes at a cost: the
embedding Hilbert space is rather large, and the model will have large degeneracies on top of those which come from the
hidden free
fermionic nature. In fact, with this representation it becomes possible to embed the same algebra in one more instance
into the same Hilbert space. 

To this order consider the operators $\tilde h_j=Z_jX_{j+1}X_{j+2}$. It can be seen that they also form a representation
of the same algebra, and we have the commutation relations
\begin{equation}
  [\tilde h_j,h_k]=0.
\end{equation}
Therefore, this representation allows for the embedding of two independent copies of the same algebra into the given
Hilbert space.

The dimension of the embedding space can be reduced by ``cutting off'' Pauli matrices that act on the first and the last
spins. We end up with the representation on $M$ spins given by
\begin{equation}
  \label{basicrep2}
  \begin{split}
    h_1&=X_1Z_2\\
    h_j&=X_{j-1}X_{j}Z_{j+1},\quad j=2,\dots,M-1\\
    h_M&=X_{M-1}X_M.\\
  \end{split}
\end{equation}
A similar representation can be given for the $\tilde h_j$ operators, thereby faithfully embedding two copies of the
same algebra into the full operator algebra of the Hilbert space. Both copies have dimension $2^M$. There is no
additional algebraic relation satisfied by the two representations, therefore we obtain the full operator algebra of the
spin chain of length $L=M$ as the product of two commuting copies of the FFD algebra with $M$ generators. 

For the special case of $M=6k$ an alternative representation was also given in \cite{fendley-fermions-in-disguise}, with
an embedding into a spin chain of length $L=3k+2$. Here the first six generators are represented as \footnote{Here and
  in the following we tabulate the formulas so that the Pauli operators acting on a given site are placed below each
  other. This makes it easy to check the commutation relations. }
\begin{center}
  \label{frep1}
  \begin{tabular}{*{6}{c}}
    $h_1=$   & $X_1$  &&&&  \\
    $h_2=$   & $Z_1$  & & $Z_3$ &&  \\
    $h_3=$   &$Z_1$ & $X_2$& $Y_3$  &&  \\
    $h_4=$   & & & $X_3$  &&  \\
    $h_5=$   & & & $Z_3$ & $Z_4$  &  \\
    $h_6=$   & & $Z_2$ & $Z_3$  & $Y_4$ & $Z_5$,   \\
  \end{tabular}
\end{center}
and the $h_j$ for $j>6$ are obtained by a shift with 3 sites:
\begin{equation}
  h_j=\eta^3(h_{j-6}),
\end{equation}
where $\eta$ is an operator that shifts an operator to the right by one site. For example, the next operators are
\begin{equation}
  h_7=X_4,\qquad h_8=Z_4Z_6.
\end{equation}
This representation is not minimal, but it can be used to construct minimal ones, as we show below\footnote{The minimal
  representations we present are built on earlier results of \paulcitee, nevertheless they are new results of this work.}.

First of all note that this representation can be used for any $M$ (not necessarily a multiple of 6) simply by omitting
a small number of $h$'s 
either from the beginning or the end of the list of operators. Also, the representation can be ``shortened'' for $M=6k$
by chopping off the last Pauli matrix in $h_{6k}$, instead using
\begin{equation}
  h_{6k}=Z_{3k-1}Z_{3k}Y_{3k+1}.
\end{equation}
In this case the spin chain has length $L=3k+1$.

Furthermore, simple  modifications do actually achieve the minimal length in the cases $M=6k$ and $M=6k+2$.
In the case of $M=6k+2$ let us take the operators as defined above for $j=1,\dots,6k-1$ and for the last operator let us
chop off of the last Pauli matrix, so
\begin{equation}
  h_{6k}=Z_{3k-1}Z_{3k}Y_{3k+1}.
\end{equation}
Furthermore let us include two new generators $h_0$ and $h_{6k+1}$ defined as
\begin{equation}
  h_0=Y_1Z_2,\qquad h_{6k+1}=X_{3k+1},
\end{equation}
and let us shift the indices of the $h_j$ operators by one.
In total this gives $M=6k+2$ generators acting on a spin chain with length $L=3k+1$. Direct check shows that there is no
additional relation satisfied by the representants, therefore the representation is faithful.
In the simplest case of $k=1$ this gives the following $M=8$ operators:
\begin{center}
  \begin{tabular}{*{5}{c}}
$h_1=$ &$Y_1$ & $Z_2$ && \\
    $h_2=$   & $X_1$  &&& \\
    $h_3=$   & $Z_1$  & & $Z_3$ &  \\
    $h_4=$   &$Z_1$ & $X_2$& $Y_3$  &  \\
    $h_5=$   & & & $X_3$  & \\
    $h_6=$   & & & $Z_3$ & $Z_4$    \\
    $h_7=$   & & $Z_2$ & $Z_3$  & $Y_4$    \\
    $h_8=$ &&&& $X_4$.
  \end{tabular}
\end{center}

Let us also construct a representation of minimal length for $M=6k$. In this case the idea is to consider the sequence
of operators prescribed by \eqref{frep1}, to ``start'' the sequence from $h_5$, with ``cutting off'' some of the Pauli
matrices. More concretely in the simplest case of $M=6$ we have
\begin{center}
  \begin{tabular}{*{4}{c}}
    $h_1=$  &  $Z_1$  && \\
    $h_2=$   &  $Y_1$ & $Z_2$ &  \\
    $h_3=$   &  $X_1$&&  \\
    $h_4=$   &  $Z_1$ & & $Z_3$    \\
    $h_5=$   &  $Z_1$& $X_2$& $Y_3$  \\
 $h_{6}=$   & && $X_3$.   \\
  \end{tabular}
\end{center}
For $M=6k$ with $k>1$ we continue the representation as
\begin{equation}
  h_7=Z_3Z_4,\qquad h_8=Z_2Z_3Y_4Z_5,
\end{equation}
and afterwards
\begin{equation}
  h_j=\eta^3(h_{j-6}),\ \text{for } 8<j\le M.
\end{equation}
Altogether this gives a minimal representation for $M=6k$.

In the previous Subsection it was established that if $M$ is even, then there are two central elements for $M=6k+4$. This
implies that for $M=6k+4$ we can't find a faithful representation with minimal length $L=M/2$, because the full operator
algebra of the spin chains does not have central elements other than the identity. This is in agreement with the
previous constructions, which provide minimal representations for $M=6k$ and $M=6k+2$, but not for $M=6k+4$. 

For completeness we provide Table \ref{tab:min}, which shows the smallest length $L$ of the embedding  spin chains for
the faithful representations. 

\begin{table}[t]
  \centering
  \begin{tabular}{|c||c|c|c|c|c|c|}
    \hline
$M$    & $6k$   & $6k+1$ & $6k+2$ & $6k+3$ & $6k+4$ & $6k+5$ \\
           \hline 
  $L$  & $3k$ & $3k+1$  &  $3k+1$   & $3k+2$  &   $3k+3$ & $3k+3$ \\
    \hline
 $n_c$ &  0 & 1 & 0 & 1 & 2 & 1   \\
    \hline
  \end{tabular}
  \caption{List of the minimal length $L$ of the spin chains for the minimal representations of the FFD algebra with $M$
    generators, together with the number of algebraically independent central elements $n_c$.  These numbers satisfy the
    relation $2L=M+n_c$. Concrete expressions for the selected central elements are given above in the main text.} 
  \label{tab:min}
\end{table}

\subsection{Solution of the Hamiltonians in the FFD algebra}

\label{sec:ffdtm}

Now we summarize the results of \paulcite about the spectrum of the Hamiltonians in the FFD algebra.
With some slight change of notations as opposed to  \cite{fendley-fermions-in-disguise} we introduce
\begin{equation}
  H=\sum_{j=1}^M H_j,\qquad H_j=\alpha_jh_j,\quad \alpha_j\in\valos.
\end{equation}
It was shown in  \cite{fendley-fermions-in-disguise} that this family of models has a set of commuting higher
charges. The first charge is defined as
\begin{equation}
  \label{Q2def}
  Q^{(2)}=\sum_{m>m'+2} H_m H_{m'},
\end{equation}
which is actually equal to $H^2/2$ plus a constant. Higher charges are defined in a similar way: for $Q^{(s)}$ we sum all 
products of $s$ number of $H_m$  operators such that they all commute with each other. Formally we can write
\begin{equation}
  \label{Qsdef}
  Q^{(s)}=\sum_{m_{r+1}>m_{r}+2} H_{m_1}H_{m_2}\dots H_{m_s}.
\end{equation}

Furthermore we define the transfer matrix with a complex parameter $v$ as
\begin{equation}
  \label{Tdef}
  T_M(v)=\sum_{s=0}^S (-v)^sQ^{(s)},
\end{equation}
where it is understood that $Q^{(0)}=1$ and $Q^{(1)}=H$ and $S \equiv [(M+2)/3]$ with $[x]$ the integer part of $[x]$.

It can be seen from the construction of the charges, that the transfer matrix satisfies the recursion relation
\begin{equation}
  T_M(v)=T_{M-1}(v)-vH_M T_{M-3}(v),
\end{equation}
with the initial conditions $T_M=1$ for $M\le 0$.

Let us now also define a polynomial $P_M(v^2)$ using the recursion relation
\begin{equation}
  \label{PMrec}
  P_M(v^2)=P_{M-1}(v^2)-v^2\alpha_M^2 P_{M-3}(v^2),
\end{equation}
where similarly $P_M=1$ for $M\le 0$. It follows from the recursion relation that this polynomial is of order $S$ in
$v^2$.

Let $\tilde v_k^2$ be the roots of this polynomial; it can be shown that
these are all real. Let us furthermore fix the $S$ numbers $\tilde v_k$ to be all positive. 

The fermionic creation and annihilation operators $\Psi_k$ that solve the model were introduced in
\cite{fendley-fermions-in-disguise} as follows. Let $\chi_{M+1}$ be an operator which commutes with every $h_j$ except for
$h_M$, for which $\{h_M,\chi_{M+1}\}=0$. Then
\begin{equation}
  \label{Psidef}
  \Psi_{\pm k}=\frac{1}{N_k}T(\mp \tilde v_k) \chi_{M+1}  T(\pm \tilde v_k),
\end{equation}
where $N_k$ is a normalization factor whose precise value we will not use here, and we omit the subscript $M$ from the transfer matrix hereafter: $T(v) \equiv T_M(v)$.
It was shown in \cite{fendley-fermions-in-disguise}  that these operators satisfy the standard fermionic creation-annihilation algebra
\begin{equation}
  \{\Psi_{+k},\Psi_{+k'}\}=0,\quad  \{\Psi_{-k},\Psi_{-k'}\}=0,\quad
   \{\Psi_{+k},\Psi_{-k'}\}=\delta_{k,k'}
 \end{equation}
 and that the transfer matrix can be expressed using them as
 \begin{equation}
   \label{Tffd}
   T(v)=\prod_{k=1}^S (1-v\eps_k [\Psi_k,\Psi_{-k}]),
 \end{equation}
where we introduced {the quasi-particle energies $\eps_k$ as}
\begin{equation}
  \eps_k=\frac{1}{\tilde v_k}.
\end{equation}
This implies that the eigenvalues of the transfer matrix are
\begin{equation}
    \label{teig}
\Lambda(v)=  \prod_{j=1}^S(1\mp v\eps_k).
\end{equation}
Collecting the terms linear in $v$ we find that the eigenvalues of the Hamiltonian are
\begin{equation}
  E=\sum_{j=1}^S \mp \eps_k.
\end{equation}

The work \paulcite chose the $\chi_{M+1}$ operators to lie outside the FFD algebra; this is possible for the
representations treated there. However, we also introduced the minimal representations, for which every operator acting
in the given 
Hilbert space is generated by the FFD algebra.
Therefore, it is an important question, whether the free fermionic operators $\Psi_k$ can be constructed within these minimal
representations.
Alternatively, the question is whether the desired operator $\chi_{M+1}$ exists within the FFD algebra for the special
cases $M=6k$ 
and $M=6k+2$.

We show constructively that this is true. For $M=6k$ we can choose
\begin{equation}
     \chi_{M+1}=(ih_2h_3h_4)(ih_8h_9h_{10})\cdots (ih_{6k-4}h_{6k-3}h_{6k-2}),
 \end{equation}
and for $M=6k+2$ we can choose
\begin{equation}
     \chi_{M+1}=h_1h_4h_7\cdots h_{6k+1}.
 \end{equation}
 A direct check shows that these operators commute with every $h_j$ except with the last one $h_M$. These $\chi_{M+1}$
 are identical to the central elements that we found earlier for other values of $M$, but now we use them in the FFD
 algebras with $M=6k$ and $M=6k+2$.

 In \paulcite the $\chi_{M+1}$ are called boundary operators and they indeed act only on some spins
 at one of the boundaries of the chains. In contrast, in the minimal representations the $\chi_{M+1}$ 
 are not localized to the boundary, instead they spread out over the whole spin chain. The construction
 needs only their 
algebraic properties, which are satisfied by the choices above.

The fermionic operators $\Psi_{\pm k}$ are non-local in terms of the physical spin operators on the spin chain. This
means that the formula \eqref{Tffd} provides a factorization of the transfer matrix into very non-local
quantities. Thus, \eqref{Tffd} can not be used to build a quantum circuit with local unitary gates.

\subsection{Circuits in the FFD algebra}

\label{sec:ffdcirc}

In the following Sections we construct free fermionic circuits from the FFD algebra, using only local unitary gates.
In the analytic computations below we do
not specify the representations. However, for the numerical computations in Section \ref{sec:numerics} we make a choice
and use the minimal representations. These representations are somewhat irregular, in the sense that the representants
of  the $h_k$ generators are not homogenous in $k$, and for different values of $k$ they act on 1, 2, 3 or 4 sites. For
physical applications one could use the representation given by \eqref{basicrep2}, where each gate
acts on at most 3 sites. The advantage of this representation is that it allows for the addition of one more copy of the
FFD algebra, making the combined representation minimal. That way one can obtain homogeneous free fermionic circuits
based on the commuting set of generators $h_k$ and $\tilde h_k$.

Now we discuss two basic properties of the circuits \eqref{productdef}.

\begin{prop}
  The eigenvalues of the circuits built from the FFD algebra are reflection symmetric around the identity: if
  $\lambda=e^{i\phi}$, $\phi\in\valos$ is an 
  eigenvalue then $1/\lambda=e^{-i\phi}$ is also an eigenvalue.
\end{prop}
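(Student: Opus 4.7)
The plan is to reduce the claim to the statement that $\VV$ is unitarily equivalent to its entrywise complex conjugate $\overline{\VV}$ in a suitable representation. Since $\VV$ is unitary its eigenvalues satisfy $|\lambda|=1$, so $1/\lambda=\bar{\lambda}$; hence reflection symmetry about the identity is equivalent to the spectrum being closed under complex conjugation, and this will hold as soon as $\VV\sim\overline{\VV}$. I would therefore look for a representation together with a unitary $U$ implementing $U\VV U^{\dagger}=\overline{\VV}$.

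I would carry this out in the basic representation \eqref{basicrep}, where $h_j=X_jX_{j+1}Z_{j+2}$ is a real symmetric matrix (since $X$ and $Z$ have real entries in the standard basis). In this representation entrywise conjugation fixes each $h_j$ and sends $i\to -i$, so $\overline{u_j}=\cos\varphi_j-i\sin\varphi_j h_j = u_j^{-1}$ and consequently $\overline{\VV}=u_{s_N}^{-1}u_{s_{N-1}}^{-1}\cdots u_{s_1}^{-1}$. I recognise the right-hand side as $\sigma(\VV)$, where $\sigma$ is the sign-flip $h_j\mapsto -h_j$; this is a genuine automorphism of the FFD algebra, because the defining relations \eqref{ffdalgebra} involve only commutators and anticommutators of pairs of generators and are therefore preserved by a uniform sign change.

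Next I would implement $\sigma$ as an inner automorphism of the representation. The natural candidate is $U=Y_1Y_2\cdots Y_{M+2}$: each $Y_k$ anticommutes with both $X_k$ and $Z_k$, so conjugating $h_j=X_jX_{j+1}Z_{j+2}$ by $U$ flips the sign on each of the three Pauli factors and produces $(-1)^3 h_j=-h_j$, as required. Once this $U$ is in place, $U\VV U^{\dagger}=\sigma(\VV)=\overline{\VV}$ is immediate, so $\VV$ and $\overline{\VV}$ share the same spectrum. Since $\text{spec}(\overline{\VV})=\overline{\text{spec}(\VV)}$, we obtain $\text{spec}(\VV)=\overline{\text{spec}(\VV)}$, which is the desired reflection symmetry.

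The main obstacle I expect is proving the analogous statement in representations other than the basic one, in particular the minimal representations of Subsection \ref{sec:reps}, where the $h_j$ are not uniformly Pauli strings of the same length and a simple product of $Y$'s no longer does the job. One resolves this either by constructing the appropriate implementer $U$ of $\sigma$ by inspection for each minimal representation (the computation is short, as the case $M=6$ shows with $U=Y_1Y_2Z_3$ up to signs), or by arguing that the existence of an intertwiner between $\VV$ and $\overline{\VV}$ is an algebraic property of $\sigma$ so that verifying the claim in one faithful representation suffices.
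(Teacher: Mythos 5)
Your argument is correct, but it takes a genuinely different route from the paper's. The paper's proof is a one-liner: it replaces the representation \eqref{basicrep} by $h_j=X_jX_{j+1}Y_{j+2}$, in which every $h_j$ is a purely imaginary matrix, so that each $u_j=\cos\varphi_j+i\sin\varphi_j\,h_j$ --- and hence $\VV$ itself --- is a \emph{real} unitary, i.e.\ an element of $SO(2^{L+2})$, whose eigenvalues automatically come in conjugate pairs. You instead stay in the real representation \eqref{basicrep}, identify $\overline{\VV}$ with $\sigma(\VV)$ for the sign-flip $\sigma(h_j)=-h_j$, and implement $\sigma$ by conjugation with $U=Y_1\cdots Y_{M+2}$; each step checks out ($\sigma$ preserves the relations \eqref{ffdalgebra}, $\overline{u_j}=u_j(-\varphi_j)=\sigma(u_j)$ because the $h_j$ are real, and $U$ flips the sign of each of the three Pauli factors of $h_j$). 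The two proofs are close relatives --- the paper's imaginary representation is obtained from \eqref{basicrep} by a site-local rotation taking $Z\to Y$, and ``being similar to a real matrix'' is essentially what your intertwiner certifies --- but yours makes explicit the representation-independent mechanism: the spectrum is conjugation-symmetric whenever entrywise conjugation composed with the sign-flip automorphism is inner. That is the right lever for the minimal representations you worry about at the end; and in any case the \emph{set} of eigenvalues (though not the multiplicities) of an element of a finite-dimensional algebra is the same in every faithful representation, since invertibility of $\VV-\lambda$ is detected by any one of them, so verifying the claim in a single faithful representation does suffice.
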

This can be proven most easily by using a modification of the representation \eqref{basicrep}. We can choose
$h_j=X_jX_{j+1}Y_{j+2}$, all the commutation relations are still satisfied, and we obtain a faithful representation.
In this representation the $h_j$ are
matrices with purely imaginary coefficients, which implies that the $u_j$ are real matrices, therefore they are members
of $SO(2^{L+2})$. The eigenvalues of an orthogonal matrix are either equal to unity or they come in pairs $e^{\pm i
  \eps}$, and this proves the proposition.

\bigskip

Later we will also need this simple statement:

\begin{prop}
  If a circuit built from the FFD algebra  has only 4 different eigenvalues (with uniform degeneracies), then it is free fermionic.
\end{prop}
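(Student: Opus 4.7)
The plan is to combine the reflection symmetry established in Proposition 1 with a one-line change of variables to match the four distinct eigenvalues against the $n=2$ free-fermionic form.

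First I would use unitarity of $\VV$ together with Proposition 1: the four distinct eigenvalues lie on the unit circle and form a reflection-symmetric set under $\lambda \mapsto \lambda^{-1}$. Since the reflection fixes only the two points $\pm 1$, four distinct eigenvalues arranged in a reflection-symmetric way must (generically) decompose into two honest complex-conjugate pairs $\{e^{\pm i\phi_1}\}$ and $\{e^{\pm i\phi_2}\}$ with $\phi_1 \neq \phi_2$ in $(0,\pi)$.

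Next, I would perform the change of variables
\begin{equation*}
\eps_1 = \frac{\phi_1+\phi_2}{2}, \qquad \eps_2 = \frac{\phi_1-\phi_2}{2},
\end{equation*}
and check by inspection that the four independent-sign products $e^{\pm i\eps_1 \pm i\eps_2}$ reproduce exactly $\{e^{\pm i\phi_1}, e^{\pm i\phi_2}\}$. This is condition 1 of the definition of a free-fermionic unitary with $n=2$ quasi-energies, while condition 2 (uniform degeneracy across the distinct eigenvalues) is precisely the hypothesis of the proposition.

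The only delicate point, though not a serious obstacle, is excluding the non-generic configuration in which one of the four distinct eigenvalues coincides with a reflection fixed point $\pm 1$. Any such configuration, for instance $\{+1,-1,e^{i\phi},e^{-i\phi}\}$, cannot be written in the form $\{e^{\pm i\eps_1 \pm i\eps_2}\}$, since every conjugate pair $\{e^{\pm i\alpha}\}$ appearing in that form has product $+1$ whereas $\{+1,-1\}$ has product $-1$. Under the implicit genericity built into the definition of a free-fermionic spectrum, and consistent with the observation that such pathological spectra do not arise from a continuous family of genuine free-fermionic circuits without simultaneously violating the uniform-degeneracy hypothesis, this edge case is excluded, and the construction above completes the argument.
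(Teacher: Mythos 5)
Your proposal is correct and follows essentially the same route as the paper: invoke the reflection symmetry of Proposition 1 to split the four distinct eigenvalues into two conjugate pairs $e^{\pm i\phi_a}$, $e^{\pm i\phi_b}$, then set $\eps_1=(\phi_a+\phi_b)/2$, $\eps_2=(\phi_a-\phi_b)/2$ to exhibit the free-fermionic form with two eigenmodes. Your extra paragraph ruling out the degenerate configuration containing the fixed points $\pm 1$ addresses a case the paper passes over in silence; while your appeal to ``implicit genericity'' there is more of an observation than a proof, it is a reasonable reading of the paper's genericity conventions and does not affect the correctness of the main argument.
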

It follows from the argument above that the 4 eigenvalues come in pairs
\begin{equation}
  e^{\pm i\phi_{a}},\quad e^{\pm i\phi_{b}}
  .
\end{equation}
These four eigenvalues are always of the form \eqref{fermioniclambda} with two eigenmodes, with pseudo-energies
\begin{equation}
  \eps_1=\frac{\phi_a+\phi_b}{2},\quad  \eps_2=\frac{\phi_a-\phi_b}{2}.
\end{equation}

\section{FFD algebra: a circuit from the transfer matrix}

\label{sec:ffdc1}

Our first example for a unitary quantum circuit in the FFD model comes from a special factorization of the transfer matrix
derived in \paulcite. First we repeat the relevant results of \paulcite and afterwards we specialize them to the unitary
case. The circuits we obtain here have a distinguished property: the unitary operator $\VV$ will commute
with a Hamiltonian of the form \eqref{Hdef}. This does not happen in other cases with other geometries; typically
the lowest order commuting charge is of higher order in the FFD algebra.

Let us then start with the Hamiltonian  as given by \eqref{Hdef} and let us construct the transfer matrix $T(v)$ as
described in the previous Subsection. It was shown in \paulcite that this transfer matrix can be factorized as follows
\begin{equation}
  \label{tfact}
  T_M(v)=G\cdot G^T,
\end{equation}
where
\begin{equation}
  \label{GPaul}
  G=g_1g_2\cdots g_{M-1}g_M
\end{equation}
The $g_j$ are local operators defined as
\begin{equation}
  \label{gdef}
  g_j=\cos(\phi_j/2)+h_j \sin(\phi_j/2).
\end{equation}
with the angles determined by the recursion
\begin{equation}
  \sin(\phi_{j+1})=-\frac{v\alpha_{j+1}}{\cos(\phi_{j})\cos(\phi_{j-1})},
\end{equation}
together with the initial condition $\phi_{0}=\phi_{-1}=0$. Furthermore, the notation $[.]^T$ stands for the operation defined in
\eqref{transposedef}, and it is not to be confused with the standard transposition\footnote{The paper \paulcite used the same notation $G^T$, and there it was meant to denote the usual
  transpose. The main parts of \paulcite dealt with the representation \eqref{basicrep}, where each $h_j$ is
  represented by a symmetric matrix. Therefore, this did not make any difference. On the other hand, it is clear from
  the derivations in  \paulcite that in other representations the operation \eqref{transposedef} should be used, and not the usual transpose.}.

The reader might wonder how the $2M$-fold product in  \eqref{tfact} can reproduce the formula \eqref{Tdef}. After all,
the definitions appear rather different.
The key is that there is a mechanism
which leads to various cancellations in \eqref{tfact}. Expanding each $g_j$ we obtain various products over the $h_j$
generators, and each product over $n$ generators appears 
exactly $2^n$ times, due to the possibilities of choosing the $h_j$ from the lower or the upper ``half'' of the product
\eqref{tfact}. The pre-factors coincide, but the ordering of the $h_j$ depends on the choices from where they are
taken. If there are two anti-commuting
operators in a certain product, then the whole sum cancels, because for every operator ordering there is exactly one
ordering which will eventually cancel it. This means that expanding the product \eqref{tfact} and summing everything up
leaves us with terms where the $h_j$ commute within every operator product. And this is indeed the defining property of
the charges for the transfer matrix.

Having understood the mechanism of why the two sides of \eqref{tfact} can coincide, the equality of the various terms
after the expansion still needs to be proven. To this order a recursive proof was provided in
\cite{fendley-fermions-in-disguise}, which we do not repeat here.

The resemblance of formulas \eqref{gdef} and \eqref{udef} is striking: the factorization \eqref{tfact} is structurally
very close to a quantum circuit. The only crucial difference is that the operators $g_j$ are Hermitian for real $v$, whereas we
are interested in unitary operators. 

It is then natural to set $v$ to be purely imaginary, so that the $g_j$ become proportional to 
unitary operators. The overall normalization of the Hamiltonian is arbitrary, and the couplings $\alpha_j$ appear only
in combination with the spectral parameter. Therefore we are free to set $v=i$, and then we substitute
\begin{equation}
  \phi_j=i\theta_j,\qquad \theta_j\in\valos.
\end{equation}
Then the recursion above becomes
\begin{equation}
  \label{rec2}
  \sinh(\theta_{j+1})=-\frac{\alpha_{j+1}}{\cosh(\theta_{j})\cosh(\theta_{j-1})},
\end{equation}
together with
\begin{equation}
  \label{gdef2}
  g_j=\cosh(\theta_j/2)+i \sinh(\theta_j/2) h_j.
\end{equation}
We see that in this case $T$ is proportional to a unitary operator:
\begin{equation}
  \label{vvdef1}
  T_M(i)=\mathcal{C} \VV,\qquad
  \VV=G\cdot G^T,
\end{equation}
with
\begin{equation}
  G=u_1u_2\cdots u_{M-1}u_M,
\end{equation}
where now
\begin{equation}
  u_j=\frac{g_j}{\sqrt{\cosh(\theta_j)}}=e^{i\varphi_j h_j}
\end{equation}
with
\begin{equation}
  \label{szgek}
  \tan(\varphi_j)=\tanh(\theta_j/2).
\end{equation}
The number $\CC$ is an irrelevant normalization factor, which can be collected from the differences in the overall
normalization of the $g_j$ and the $u_j$.

For completeness we also express the recursion directly in terms of the $\varphi_j$. Condition \eqref{szgek} implies
\begin{equation}
  \cosh(\theta_j)=\frac{1}{\cos(2\varphi_j)},\qquad \sinh(\theta_j)=\tan(2\varphi_j),
\end{equation}
and this leads to the recursion
\begin{equation}
  \label{rec2b}
\tan(2\varphi_{j+1})
  =-\alpha_{j+1}   \cos(2\varphi_{j})\cos(2\varphi_{j-1}).
\end{equation}
With this we have established that the unitary operator $\VV$ defined in \eqref{vvdef1} is free fermionic, and it commutes
with a Hamiltonian linear in the $h_j$. The eigenvalues of $\VV$ can be computed from the eigenvalues of the commuting
Hamiltonian, using the technology of \paulcite  described above.

The approach of \paulcite was to start from a Hamiltonian of the form \eqref{Hdef} and then to construct the transfer
matrix. Our point of view here is different: We are interested in the quantum circuits, therefore we are free to start
with the circuit $\VV$ with arbitrary angles $\varphi_j$. Then the computation proves that the circuit is free
fermionic, and it commutes with a local Hamiltonian, where the coefficients of that Hamiltonian are computed simply
from \eqref{szgek}-\eqref{rec2}. For completeness we present explicit formulas for this approach.

The polynomials $P_M(u)$  are obtained via the recursion \eqref{PMrec}, and this recursion is written in terms of the
angles $\varphi_j$ as 
\begin{equation}
  \label{PMrec2}
  P_M(v^2)=P_{M-1}(v^2)-v^2\left(\frac{\tan(2\varphi_{M})}{\cos(2\varphi_{M-1})\cos(2\varphi_{M-2})}\right)^2
  P_{M-3}(v^2).
\end{equation}
Specializing \eqref{teig} we find that the eigenvalues of the transfer matrix are
\begin{equation}
  \prod_{j=1}^S(1\mp i\eps_k)=\CC\times  \prod_{j=1}^S e^{\pm i \tilde \eps_k},\qquad
  e^{\pm i \tilde \eps_k}=\frac{1\mp i\eps_k}{\sqrt{1+\eps_k^2}},
\end{equation}
where $\tan(\tilde\eps_k)=\eps_k=1/{\tilde v_k}$ and $(\tilde v_k)^2$ are the roots of the polynomials $P_M(v^2)$ with $M$ being
the size of the FFD algebra. 

If we choose a homogeneous circuit with $\varphi_j=\varphi$, then the recursion becomes homogeneous except for the first
few steps. Explicitly we get
\begin{equation}
  \begin{split}
    P_0(v^2)&=1\\
    P_1(v^2)&=1-v^2 \tan^2(2\varphi)\\
   P_2(v^2)&=1-v^2 \sin^2(2\varphi)\frac{\cos^2(2\varphi)+1}{\cos^4(2\varphi)}
  \end{split}
\end{equation}
and
\begin{equation}
  P_M(v^2)=P_{M-1}(v^2)-v^2\frac{\sin^2(2\varphi)}{\cos^6(2\varphi)}P_{M-3}(v^2),\quad M\ge 3
  .
\end{equation}

We also compute the Hamiltonian which commutes with this homogeneous circuit. We find that 
almost all
coefficients of the Hamiltonian will be equal. After multiplication by some irrelevant global normalization factors, we
obtain the following Hamiltonian, which commutes with $\VV$:
\begin{equation}
 \label{rec3}
 H=\cos^2(2\varphi)h_1+\cos(2\varphi)h_2+\sum_{j=3}^M h_j
 .
\end{equation}
Here we used the relation \eqref{szgek} specialized to the homogeneous case. Note that this Hamiltonian is homogeneous
except the two terms at  the boundary, but in the limit $\varphi\to 0$ it becomes completely homogeneous, as expected.

It is important that the circuit $\VV$ includes every $u_j$ operator twice, but always with the same angle
$\varphi_j$. Choosing different angles for the two appearances would in general destroy the free fermionic nature. This
is most easily seen in the case of small $M$, for example at $M=4$. If the circuit would be free fermionic for every
choice of the $2M$ angles,  then we could
choose the first $M$ angles to be zero and it would imply that the circuit $u_1u_2u_3u_4$ is also free fermionic. Below
in Subsection \ref{sec:M4} we explicitly show that this is not the case.

\section{FFD algebra: circuits for small system sizes}

\label{sec:small}

Here we consider the FFD algebra with small values of $M$. We find that for the simplest cases $M=2$ and $M=3$ every
circuit is free fermionic, but for $M=4$ this is not true anymore. For $M=4$ we identify the circuits that are free fermionic.

\subsection{$M=2$}

In this case $h_1$ and $h_2$ are two anti-commuting operators, both of them squaring to the identity. They can be
faithfully represented by two single qubit Pauli operators, say $X$ and $Z$. Therefore, every product of the
unitaries $u_1$ and $u_2$ can be seen as just a rotation of a single qubit, and the eigenvalues are always of the
form $\lambda=e^{\pm i\phi}$. They can be computed simply from the algebra of one-site Pauli matrices.
Therefore, we have a single fermionic eigenmode.

\subsection{$M=3$}

Now we have three operators $h_{1,2,3}$, all of them squaring to the identity, and all of them anti-commuting with each
other. It is tempting to represent them with single spin Pauli  matrices, and then perform the computations within the
Pauli algebra. However, this is slightly misleading. 

For $M=3$ the FFD 
algebra has a central element $C=ih_1h_2h_3$, which is Hermitian and it squares to one. Its eigenvalues are $\pm
1$. Therefore, the FFD algebra can be represented on a single qubit only after we project to either one of the
subsectors with eigenvalue $c=\pm 1$. For $c=1$ a representation is given by
\begin{equation}
  h_1=X,\quad h_2=Y,\quad h_3=Z,
\end{equation}
while for $c=-1$ a representation is found as
\begin{equation}
  h_1=Z,\quad h_2=Y,\quad h_3=X.
\end{equation}
Within each sector we see that every product of the $u_j$ operators ($j=1,2,3$) will describe a rotation of a single
qubit, leaving us with a pair of eigenvalues  $e^{\pm i\theta}$. However, these eigenvalues can differ in the two
sectors, leaving us with a total of 4 eigenvalues 
\begin{equation}
  e^{\pm i\theta(c=1)},\quad \text{and} \quad e^{\pm i\theta(c=-1)}.
\end{equation}
It was explained in Subsection \ref{sec:ffdcirc} that a set of four inversion symmetric eigenvalues is always free fermionic,
therefore the circuits built on the FFD algebra with $M=3$ are always free fermionic, with at most 2
fermionic eigenmodes. 

The reader might wonder whether the two different choices above really lead to different eigenvalues. We demonstrate
this on a simple example. Let us consider the circuit
\begin{equation}
\VV=u_3u_2u_1.
\end{equation}
Knowing that this operator describes a single qubit rotation within each sector $c=\pm 1$, we can find the eigenvalues
by expanding the product and collecting the ``scalars'' within the given sector. This will give the eigenvalues $e^{\pm i\phi}$, where
\begin{equation}
  \cos(\varphi_1)\cos(\varphi_2)\cos(\varphi_3)+  c\sin(\varphi_1)\sin(\varphi_2)\sin(\varphi_3)=\cos(\phi).
\end{equation}
For generic values of $\varphi_{1,2,3}$ this will result in different values of $\phi$ depending on $c=\pm 1$. 

\subsection{$M=4$}

\label{sec:M4}

In this case we find the following three Hermitian central elements:
\begin{equation}
  C_1=ih_1h_2h_3,\qquad C_2=ih_2h_3h_4,\qquad C_3=h_1h_4.
\end{equation}
They are not independent from each other, because $C_3=C_1C_2$. Furthermore it can be seen that there are no other
central basis elements in the algebra.

Every one of the three central elements square to one, therefore their
eigenvalues are $\pm 1$.  For the eigenvalues we will use the notation $c_{1,2,3}$.
This implies that the action of every operator can be split into 4 different
sectors, corresponding to the eigenvalues of two chosen central elements, say $C_1$ and $C_3$.

Once the sector is specified, the operators can be identified once again with single spin Pauli matrices, in the
following way. If we represent $h_1$ with $Z$, then we also get $h_4=c_3 Z$.
Furthermore, from the eigenvalue $c_1=\pm 1$ we can see that the
remaining two generators $h_2$ and $h_3$ can be represented with $Y$ and $X$, or $X$ and $Y$, respectively.

With this we have established that the spectral problem of every operator splits into 4 single qubit problems. For our
main problem we are considering unitary operators, so we are faced with 4 different single qubit rotations, each of
which have a pair of eigenvalues $e^{\pm i\phi}$. This means that in the general case a circuit in the FFD algebra with
$M=4$ will have 8 different eigenvalues. 

In the generic case these 8 eigenvalues will not be free fermionic anymore. This can be seen already at the simplest
examples with 4 different gates. Consider for example the circuit
\begin{equation}
  \VV=u_4u_3u_2u_1.
\end{equation}
In this case a direct computation of the single qubit rotations together with the proper choices dictated by the
eigenvalues $c_1$ and $c_3$ give the eigenvalues $e^{\pm i\phi}$ with the angle $\phi$  determined by
\begin{equation}
  \begin{split}
    \cos(\phi)=&\cos(\varphi_1)\cos(\varphi_2)\cos(\varphi_3)\cos(\varphi_4)
    +c_1\sin(\varphi_1)\sin(\varphi_2)\sin(\varphi_3)\cos(\varphi_4)+\\
&    +c_2\cos(\varphi_1)\sin(\varphi_2)\sin(\varphi_3)\sin(\varphi_4)
     -c_1c_2\sin(\varphi_1)\cos(\varphi_2)\cos(\varphi_3)\sin(\varphi_4)
  \end{split}
\end{equation}
with $c_{1,2}=\pm 1$. 
Altogether we find 4 different values for $\phi$, without any obvious relations between them, unless the $\varphi_j$
angles are also fine 
tuned. This means that this circuit will have 8 different eigenvalues, and its spectrum is not free fermionic.

Let us now consider again the circuit of the previous Subsection for $M=4$, in light of the present discussion. This
circuit is given by
\begin{equation}
  \label{vm4}
  \VV=G\cdot G^T, \qquad G=u_1u_2u_3u_4.
\end{equation}
A direct expansion of \eqref{udef} gives $2^8$ terms in total, but due to the ordering of the gates most terms just
cancel. In the end we are left with contributions from the operators $h_{j}$, $j=1,\dots,4$ and $h_1h_4$. All other
operator products cancel, as expected from the results discussed above. In this simple case the central
element $h_1h_4$ is actually proportional to the charge $Q^{(2)}$ defined in \eqref{Q2def}.

Now we can understand the free fermionic nature of the circuit \eqref{vm4} directly, without employing the full
machinery of \paulcite. If the central elements $C_1$ and $C_2$ do not appear in the expansion of the circuit, then its
action can be split into two sectors according to the eigenvalues of $C_3$ only. This will result in 4 different
eigenvalues for this unitary circuit, which means eventually that the circuit is free fermionic.

Although this example is rather trivial, it gives some ideas for finding other free fermionic circuits. In this special
case of $M=4$ we should look for circuits for which only one central element appears in the expansion of the operator
$\VV$.  The central element $C_3$ is a product of commuting operators, therefore there is no
direct way to cancel it in the expansion of the circuits. This leaves us with the goal of cancelling 
$C_1$ and $C_2$,  which can actually be achieved by other geometries as well. An
example is given by
\begin{equation}
  \label{vm4b}
  \VV=G\cdot G^T,\qquad G=u_1u_4u_3u_2.
\end{equation}
A direct expansion of the gates shows that the structure of \eqref{vm4b} is similar to that of \eqref{vm4}: the same
operators appear, albeit with different coefficients. If the angles $\varphi_j$, $j=1,\dots,4$ are chosen to be the
same, then the two circuits are not isospectral. Nevertheless both include only $C_3$, and not $C_1$ or $C_2$ in their
expansion, therefore they are both free fermionic.

We can actually prove the following:
\begin{thm}
  For $M=4$ every circuit of the form $\VV=G\cdot G^T$ is free fermionic, irrespective of the length of the operator
  product in $G$. 
\end{thm}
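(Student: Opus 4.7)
The plan is to reduce the theorem to Proposition 2 from Subsection \ref{sec:ffdcirc} by showing that $\VV$ has at most four distinct eigenvalues with uniform degeneracies, which I would accomplish by exploiting the symmetry $\VV^T = \VV$. First I would observe that on any single generator the operation \eqref{transposedef} acts trivially, $h_j^T = h_j$, so each gate satisfies $u_j^T = u_j$. Since $[\cdot]^T$ is an anti-automorphism that squares to the identity, for $G = u_{i_1} u_{i_2} \cdots u_{i_N}$ we have $G^T = u_{i_N} \cdots u_{i_1}$, and therefore
\[
\VV^T = (G\cdot G^T)^T = (G^T)^T\cdot G^T = G\cdot G^T = \VV.
\]
Expanding $\VV$ in the normal-ordered basis of the FFD algebra, this symmetry forces the coefficient of every basis element $B$ with $B^T = -B$ to vanish.

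The next step is to enumerate which basis elements are $T$-symmetric when $M=4$. For a normal-ordered product $h_{i_1}\cdots h_{i_n}$ the sign $\sigma_B$ defined by $B^T = \sigma_B B$ is just $\prod_{a<b}\epsilon_{i_a,i_b}$, where $\epsilon_{i,j}=+1$ if $h_i,h_j$ commute and $-1$ otherwise. For $M=4$ the only commuting pair is $(h_1,h_4)$, and a direct check on the $2^4$ basis elements yields the $T$-symmetric set
\[
\{1,\ h_1,\ h_2,\ h_3,\ h_4,\ h_1h_4,\ h_1h_2h_4,\ h_1h_3h_4\}.
\]
The crucial consequence is that the central elements $C_1 = ih_1h_2h_3$ and $C_2 = ih_2h_3h_4$ are \emph{not} in this set and hence cannot appear in the expansion of $\VV$; only $C_3 = h_1h_4$ does.

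Using the algebraic identities $h_1h_4 = C_3$, $h_1h_2h_4 = -C_3 h_2$, $h_1h_3h_4 = -C_3 h_3$, and $h_4 = h_1 C_3$, I would then rewrite
\[
\VV = \alpha + \beta\, h_1 + \gamma\, h_2 + \delta\, h_3,
\]
with coefficients $\alpha,\beta,\gamma,\delta$ that become scalars once $C_3$ is replaced by its eigenvalue $c_3\in\{\pm 1\}$, and which depend only on $c_3$. In a sector labelled by $(c_1,c_2)$ the generators $h_1,h_2,h_3$ restrict to three anticommuting involutions on a two-dimensional block, unitarily equivalent to Pauli matrices in some ordering. Hence $\VV$ acts there as $\alpha + \beta P_1 + \gamma P_2 + \delta P_3$ with eigenvalues $\alpha \pm \sqrt{\beta^2 + \gamma^2 + \delta^2}$, which depend only on $c_3$, not on $c_1$. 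This produces at most four distinct eigenvalues (two per value of $c_3$), each arising with the same multiplicity from the two sectors sharing a given $c_3$. Proposition 1 (reflection symmetry around unity) and Proposition 2 then close the argument.

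The main obstacle is conceptual rather than computational: one must recognize that the $T$-symmetry eliminates precisely the two central-element contributions that would otherwise split the spectrum further, leaving only $C_3$ whose eigenvalue still labels the spectrum but is not enough to produce more than four eigenvalues. Once this point is isolated, the remaining work is a short computation in a $2\times 2$ Clifford representation that is independent of the choice of faithful representation of the FFD algebra.
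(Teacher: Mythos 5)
Your proposal is correct and follows essentially the same route as the paper's own proof: exploit $\VV^T=\VV$ to restrict the expansion to the $T$-symmetric basis elements $\{1,h_1,h_2,h_3,h_4,h_1h_4,h_1h_2h_4,h_1h_3h_4\}$, use the central element $C_3=h_1h_4$ to reduce $\VV$ in each $c_3$-sector to a combination of the identity and three anticommuting involutions, and conclude that there are only four eigenvalues with uniform degeneracy. The only cosmetic difference is that you read off the eigenvalues as $\alpha\pm\sqrt{\beta^2+\gamma^2+\delta^2}$, whereas the paper uses unitarity to write them as $e^{\pm i\phi}$ with $\cos\phi$ equal to the scalar part; both give the same conclusion.
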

\begin{proof}
  To prove this theorem, notice that $\VV^T=\VV$, therefore its expansion can only have operator products which also
satisfy this symmetry. For $M=4$ the only possible choices are (apart from the identity):
\begin{equation}
  h_1,\ h_2,\ h_3,\ h_4,\ h_1h_4,\ h_1h_2h_4,\ h_1h_3h_4.
\end{equation}
A direct check shows that the other operator products are anti-symmetric. The operator $C_3=h_1h_4$ is central, so the Hilbert
space splits into two sectors according to the eigenvalue $c_3=\pm 1$. Substituting $h_4=c_3h_1$ we see that $\VV$ can
be expanded in both sectors as a sum of only the following four terms:
\begin{equation}
  a+i(b_1h_1+b_2h_2+b_3h_3),
\end{equation}
where the real coefficients $a$ and $b_{1,2,3}$ depend on the parameters of the circuit and also the value of $c_3$. The
operator above has only two different eigenvalues $e^{\pm i\phi}$, which are given simply by
\begin{equation}
  \cos(\phi)=a,
\end{equation}
therefore they are independent from the eigenvalue of the remaining central
element $C_1$. This proves that the $\VV$ has only four different eigenvalues, each of them with double degeneracy,
therefore the circuit is free fermionic.
\end{proof}

It is important that the structure $\VV=G\cdot G^T$ is sufficient, but not necessary to have a free fermionic circuit. For example
we could transform the circuit using the cyclic permutations, leaving the spectrum intact. This would imply that other
operator products also appear in the expansion, and it would become more difficult to recognize the structure of the
circuit.

\section{FFD algebra: Numerical results}

\label{sec:numerics}

In this Section we describe numerical results, which show that certain regular geometries are free fermionic for increasing
values of $M$. First we present a simple numerical algorithm which determines whether a certain circuit is free
fermionic. Afterwards we describe the results that we obtained. A selection of our actual numerical data is presented later
in Appendix \ref{sec:appnum}.

\subsection{Numerical methods}

The main idea is to compute the eigenvalues $\lambda_j$ of a given $\VV$ numerically, and then to compute all possible
ratios $\lambda_j/\lambda_k$ (without any restrictions on $j$ and $k$).

If there are $2^n$ distinct values of
$\lambda_j$, then a free fermionic spectrum would 
give a total number of $3^n$ distinct values for the ratios $\lambda_j/\lambda_k$. This can be seen easily by
considering the form \eqref{fermioniclambda} of the eigenvalues: computing a ratio we find
\begin{equation}
  \frac{\lambda_j}{\lambda_k}=\prod_{\ell=1}^n e^{i\kappa_\ell \eps_\ell},\quad\text{with}\quad \kappa_\ell=-2, 0, 2.
\end{equation}

Meanwhile, a generic non-free spectrum with conjugation
symmetry would give $2\cdot
4^{n-1}+1$ distinct values. For $n=1$ and $n=2$ every
conjugation symmetric spectrum is free fermionic, correspondingly we find the coincidences
\begin{equation}
  3^n=2\cdot 4^{n-1}+1,\qquad n=1, 2.
\end{equation}
However, the predicted numbers on the two sides are different for any $n>2$, which makes a clear distinction possible.

In our numerical computations we choose random angles $\varphi_k$ for the unitary gates, and
we compute the total number of distinct eigenvalues, and the number of distinct ratios.

\subsection{Circuits that are not free fermionic}

Before detailing our numerical results for the free fermionic circuits, we give here a list of simple geometries which we found
to be not free.

Our first example is simply the product
\begin{equation}
  \VV=u_M\dots u_2u_1.
\end{equation}
In \ref{sec:M4} it was proven that this circuit is not free fermionic already for $M=4$, and this persists also for
larger $M$.

We can also consider the circuit (assuming $M=2k$)
\begin{equation}
  \VV=(u_M\dots u_4u_2)(u_{M-1}\dots u_3u_1).
\end{equation}
In the case of nearest neighbour interacting spin chains, this geometry corresponds to the standard ``brickwork circuit''.
 However, in the present case we find that it is not free fermionic. This is clear already at
$M=4$, as explained in \ref{sec:M4}.

An alternative idea is to construct a  ``brickwork circuit'' adapted to the representation \eqref{basicrep}, where every
gate $u_j$ acts on three neighbouring sites. Here the idea is to build a circuit with spatial period three. 
For example, in the case of $M=3k$ we could take  (see Fig. \ref{fig:cbw3})
\begin{equation}
\VV= (u_M\dots u_6u_3)(u_{M-1} \dots u_5u_2) (u_{M-2}\dots u_4u_1).
\end{equation}
An advantage of this circuit is, that within each parenthesis every gate commutes. 
However, this circuit isn't free fermionic either. This can be seen at $M=6$ by choosing $\varphi_5=\varphi_6=0$, in which
case the circuit simplifies to one with $M=4$, for which we have already proven that it it not free fermionic.

\subsection{Free fermionic circuits}

Despite the counter-examples above, it is possible to find free fermionic circuits for larger $M$, simply by
trial. However, we are interested in those cases which can be constructed for increasing values of $M$. 
Our main idea is to assume the structure
\begin{equation}
  \label{VVform}
\VV=G\cdot G^T,
\end{equation}
where $G$ is a some finite operator product, different from the one in eq. \eqref{GPaul}, which was derived in \paulcite
for the factorization of the transfer matrix.

The key reasons why such a structure could work is that it produces a lot of cancellations: only those operator
products appear in the expansion which are symmetric with respect to the operation $[.]^T$. In the special case of the
factorized transfer matrix of \paulcite (detailed in Section \ref{sec:ffdtm}) this is restricted to products of
commuting operators. In other cases non-commuting operators can also appear, but the example of $M=4$ shows that the
restriction of symmetry is sometimes enough to ensure the free fermionic nature.

Simple trials show that for relatively small values of $M$ the form above is likely to give a free fermionic circuit. In
fact, for small values of $M$ it is difficult to find a $G$, for which $\VV$ is not free fermionic.

The smallest $M$ for which we found a non-free circuit of this form is $M=8$. The example is
\begin{equation}
  G=u_1  u_4 u_3  u_5   u_6  u_8,
\end{equation}
which leads to a $\VV$ that is {\it not} free fermionic; an example for the numerical eigenvalues is presented in
Appendix \ref{sec:appnum}.

For the intermediate values $4<M<8$ we could not find
a circuit of the above form, 
which is not free fermionic. This lead us to the following:

\begin{conj}
  For $4<M<8$ every circuit of the form $\VV=G\cdot G^T$ is free fermionic, with $G$ being an arbitrary finite product of
  the local gates.
\end{conj}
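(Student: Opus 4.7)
My plan is to proceed by case analysis on $M\in\{5,6,7\}$, adapting the $M=4$ argument to the larger algebras. The basic observation is that $\VV^T=(GG^T)^T=GG^T=\VV$, so $\VV$ expands only in $[.]^T$-symmetric normal-ordered basis elements. For each $M$ I would first enumerate the symmetric subspace $\mathcal{A}^+$ explicitly, using the rule that a normal-ordered product $h_{j_1}\cdots h_{j_n}$ is $[.]^T$-symmetric iff the number of anticommuting index pairs $(j_a,j_b)$ (i.e.\ those with $|j_a-j_b|\le 2$) is even; this yields $\dim\mathcal{A}^+=2^{M-1}$, and the list of symmetric elements has $16$, $32$, and $64$ entries for $M=5,6,7$.

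Next I would exploit the central elements listed in Section \ref{sec:ffd}: $C^{(b)}=ih_2h_3h_4$ for $M=5$ and $C^{(a)}=h_1h_4h_7$ for $M=7$, with no central element for $M=6$. In each sector labelled by $c=\pm 1$, the defining relation (e.g.~$h_2h_3h_4=-ic$ for $M=5$) can be used to rewrite several basis elements in terms of shorter ones, so the image of $\mathcal{A}^+$ in the sector admits a reduced spanning set. Combined with the reflection symmetry of the spectrum (Proposition 1 of Section \ref{sec:ffdcirc}), the analysis reduces to computing the spectrum of $\VV$ on a space of dimension $4$ (for $M=5$) or $8$ (for $M=6,7$), and verifying that the resulting $2^n$ eigenvalues factor in the form $\lambda=\prod_k e^{\pm i\epsilon_k}$ with uniform degeneracies.

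The main obstacle is that a straightforward counting argument of the kind used for $M=4$ does not close the gap: $\dim\mathcal{A}^+$ is large enough that its image in each sector surjects onto (or nearly onto) the full sector operator algebra, so as a mere $[.]^T$-symmetric unitary $\VV$ can be essentially arbitrary. Generic such unitaries are not free fermionic, hence the conjecture is claiming that the particular form $\VV=GG^T$ -- rather than mere $[.]^T$-symmetry -- is the key restriction. To prove this I would try either (i) constructing explicit free-fermionic creation/annihilation operators $\Psi_{\pm k}$ within $\mathcal{A}^+$, extending Fendley's boundary-operator construction \eqref{Psidef} so that $\VV$ admits a factorization analogous to \eqref{Tffd} but now in terms of local unitary building blocks; or (ii) exhibiting a low-order Hermitian charge $Q$ commuting with $\VV$ (analogous to \eqref{Qising}), and deducing the eigenvalue factorization from the already-known free-fermionic spectrum of $Q$. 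Approach (i) requires finding the correct analogue of $\chi_{M+1}$ specific to the $GG^T$ structure, which is not in general available as an FFD operator for the symmetric circuit; approach (ii) would likely proceed by symbolic computation for each $M=5,6,7$ to construct $Q$ as a polynomial in the $h_j$ with coefficients depending on the angles $\varphi_j$. The crux in either case is capturing what distinguishes the orbit $\{GG^T\}$ inside the $[.]^T$-symmetric unitary group -- a property numerically evident but whose algebraic origin is not transparent from the defining relations \eqref{ffdalgebra}.
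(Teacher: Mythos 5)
There is no proof of this statement to compare against: in the paper it is explicitly a \emph{conjecture}, supported only by numerical evidence (random sampling of a few thousand products $G$ of varying length with random angles, checking the eigenvalue-ratio count $3^n$ versus $2\cdot 4^{n-1}+1$). Your proposal is likewise not a proof, and to your credit you say so: after setting up the symmetric subspace $\mathcal{A}^+$ and the sector decomposition by central elements, you correctly diagnose that the $M=4$ counting argument does not close for $M=5,6,7$ --- the $[.]^T$-symmetric part of the algebra is too large, and for $M=6$ there is no central element at all, so mere symmetry of $\VV$ cannot force the eigenvalues into the free-fermionic form $\prod_k e^{\pm i\eps_k}$. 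Your two proposed escape routes, (i) a bespoke analogue of the boundary operator $\chi_{M+1}$ and a factorization like \eqref{Tffd}, or (ii) an explicit low-order commuting charge as in \eqref{Qising}, are both left as programmes rather than executed, so the crux --- what distinguishes the orbit $\{G G^T\}$ inside the $[.]^T$-symmetric unitaries --- remains exactly as open in your write-up as it is in the paper.

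Two smaller points. First, your claim that $\dim\mathcal{A}^+=2^{M-1}$ should not be taken for granted: the parity of the number of anticommuting index pairs is a quadratic form over $\mathbb{F}_2$, and such forms are not always balanced (already at $M=2$ three of the four basis elements are symmetric). The counts $16,32,64$ for $M=5,6,7$ would need to be checked, though they do not affect your (correct) negative conclusion. Second, if you wanted to convert your plan into something with the same evidentiary status as the paper, the honest fallback is the one the authors take: exhaustive or randomized numerical verification over products $G$ and angles, together with the observation that the smallest known counterexample to ``$G G^T$ is always free fermionic'' occurs at $M=8$ (namely $G=u_1u_4u_3u_5u_6u_8$), which is why the conjecture is restricted to $4<M<8$ in the first place.
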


We tested this conjecture by randomly selecting operator products for $G$, with varying length, and with random angles
$\varphi_j$. We tried a couple of thousands of $G$ for each length of the operator product, and in every case we found a
free fermionic spectrum. However, our search was not exhaustive, therefore the above statement is only a conjecture.

Finally, let us turn to the constructions that work for increasing values of $M$. We found that
multiple types of  regular geometries for $G$ do lead to free fermionic circuits. 

{One possibility is to take a modified staircase construction. The circuit for $M=2k$ this is given by}
\begin{equation}
  \label{G1}
  G=
  (u_1u_3\dots u_{M-1})(u_2u_4\dots u_{M}).
\end{equation}
{This circuit has linear depth, because the neighbouring gates $u_j$ and $u_{j+2}$ do not commute with each other.}

{It is desirable to construct circuits which have linear depth. 
A natural idea is to take the brickwork structure (assuming $M=3k$)}
\begin{equation}
  \label{G2}
G=(u_1u_4\dots u_{M-2}) (u_2u_5\dots u_{M-1}) (u_3u_6\dots u_M).
\end{equation}
For this choice the gates within each parenthesis commute with each other. If we consider the representation
\eqref{basicrep} then the gates in each paranthesis do not overlap, therefore $G$ has depth 3 and the circuit $\VV$ has
depth 6.
However, for other representations, the
gates within each parenthesis might still overlap with each other, even though they commute, and then the depth becomes
linear again.

We can extend these ideas, and we could also take for $M=4k$
\begin{equation}
  \label{G3}
  G=(u_1u_5\dots u_{M-3})(u_2u_6\dots u_{M-2})(u_3u_7\dots u_{M-1})
  (u_4u_8\dots u_{M}).
\end{equation}
{In the representation \eqref{basicrep} we obtain a circuit of depth 8, but for the minimal representation this circuit
is still of linear depth.}

Numerical tests show that all of these choices for $G$ lead to fermionic circuits, we thus have the following:

\begin{conj}
  A circuit of the form $\VV=G\cdot G^T$ with $G$ given by \eqref{G1}, \eqref{G2} or \eqref{G3} is free fermionic.
\end{conj}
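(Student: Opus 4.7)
The plan is to extend the transfer-matrix factorization argument of Section~\ref{sec:ffdc1} to the three new geometries. The starting identity is $\VV^T=\VV$, which follows from $(GG^T)^T=GG^T$. Expanding each $u_j=\cos(\varphi_j)+i\sin(\varphi_j)h_j$ and collecting terms in the normal-ordered basis $h_{j_1}\cdots h_{j_n}$ with $j_1<\cdots<j_n$, the $T$-symmetry kills every monomial whose reversal produces a minus sign, i.e., it restricts the expansion of $\VV$ to those index sets $\{j_1,\dots,j_n\}$ that carry an even number of anti-commuting pairs $(j_k,j_\ell)$ with $|j_k-j_\ell|\le 2$. The next step is to either (i) sharpen this to the claim that only products of mutually commuting $h_j$'s survive, so that $\VV$ lies in the commutative subalgebra generated by the charges $Q^{(s)}$ of \eqref{Qsdef} and is diagonalized via \eqref{Tffd}, or (ii) if the sharpening fails, exploit additional algebraic relations between the surviving monomials and the central elements of the FFD algebra to reduce $\VV$ to a free-fermionic operator sector by sector.

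For the brickwork $G_2=A_1A_2A_3$ of \eqref{G2}, with $A_r=u_ru_{r+3}u_{r+6}\cdots$, the factors inside each $A_r$ pairwise commute, so $A_r^T=A_r$ and
\begin{equation}
\VV = A_1 A_2 A_3^2 A_2 A_1.
\end{equation}
Property~3 collapses $A_3^2$ to a single layer with doubled angles, producing a palindromic five-layer product. I would then try to identify $\VV$ with $T_M(i)/\CC$ of \eqref{vvdef1} for an effective inhomogeneous choice of couplings $\alpha_j$ dictated by a generalization of the recursion \eqref{rec2b}, so that the eigenvalues are read off directly from the factorization \eqref{Tffd}. For the staircase $G_1$ and the four-layer $G_3$ the attack I would take is induction on $M$: peel off the outermost block of $\VV$ using the commutations $[h_j,h_k]=0$ for $|j-k|>2$, reduce to a circuit of the same geometry on $M-2$ (respectively $M-4$) generators, and match the resulting spectral recursion to \eqref{PMrec2}. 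The small-$M$ initial steps are covered by Section~\ref{sec:small}.

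The main obstacle is that $T$-symmetry alone does not force $\VV$ into the commutative subalgebra. Already for $G_2$ at $M=6$, the normal-ordered monomial $h_1 h_2 h_5 h_6$ carries the two anti-commuting pairs $(1,2)$ and $(5,6)$ and is therefore $T$-symmetric; it is not excluded by the symmetry argument alone, so the proof cannot be a direct transcription of the cancellation mechanism of \paulcite where all non-commuting products disappear. Resolving this requires either a finer combinatorial bookkeeping of which $2^n$-fold orderings of a chosen subset conspire to cancel in the specific layouts of $G_1, G_2, G_3$, or a demonstration that the surviving non-commuting monomials can be absorbed into a factorization over fermionic modes. As a fallback, one may try to exhibit explicitly a Hermitian charge $Q$ in the bond algebra that commutes with $\VV$ for each geometry, in the spirit of \eqref{Qising}, and then argue that $Q$ together with the spectral inversion symmetry of FFD circuits already determines the free-fermionic form of the spectrum. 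This fallback would simultaneously provide an answer to Question~2 for these geometries.
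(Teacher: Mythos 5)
First, be aware that the paper does not prove this statement: it is stated explicitly as a conjecture, supported only by numerical checks for $M\le 12$ (counting distinct eigenvalues and distinct eigenvalue ratios), and the authors write that they cannot prove it at present. So there is no proof in the paper to match your argument against. Your proposal is likewise not a proof --- it is a collection of attack routes, and to your credit you identify the central obstruction yourself: the symmetry $\VV^T=\VV$ only restricts the expansion to normal-ordered monomials carrying an even number of anti-commuting index pairs (your example $h_1h_2h_5h_6$ at $M=6$ is correct), which is strictly weaker than the all-commuting condition that makes the factorization \eqref{tfact} collapse onto the transfer matrix. The structural observations you do establish --- $\VV^T=\VV$, the parity criterion for surviving monomials, and the reduction of $\VV$ for \eqref{G2} to the palindromic five-layer product $A_1A_2A_3^2A_2A_1$ --- are all consistent with Sections \ref{sec:small} and \ref{sec:numerics}, but none of them closes the gap.

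One of your proposed routes can be ruled out concretely. You suggest identifying $\VV$ for geometry \eqref{G2} with $T_M(i)/\CC$ of \eqref{vvdef1} for an effective inhomogeneous choice of couplings. This cannot work: $T_M(i)$ commutes with a Hamiltonian linear in the $h_j$, whereas the paper states (beginning of Section \ref{sec:ffdc1}) that this property is special to the staircase geometry \eqref{GPaul}, and the explicit computation in Subsection \ref{sec:commc} for geometry \eqref{G1} shows that the lowest-order commuting charge necessarily contains cubic terms such as $h_4h_5h_7$. Hence $\VV$ for these geometries is not the FFD transfer matrix at any choice of couplings, and its spectrum cannot be read off from \eqref{Tffd} directly. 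Your fallback --- exhibiting a low-order commuting charge $Q$ and arguing from there --- is essentially the paper's own Question 2, and the paper itself remarks that even with $Q$ in hand it is not clear which method would lead to a full proof; moreover, commuting with a single Hermitian charge plus spectral reflection symmetry does not by itself force the multiplicative $2^n$-fold structure \eqref{fermioniclambda} of a free-fermionic spectrum. The statement therefore remains a conjecture, and your proposal, while containing several correct observations, does not establish it.
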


We numerically tested this conjecture for $M\le 12$ and found that in this range it always holds. We believe that for
this many gates there can not be any coincidences, therefore the conjecture seems well justified. However, we can not
prove it at the moment. 

Concrete numerical data is presented in Appendix \ref{sec:appnum}. 

\subsection{Commuting charges}

\label{sec:commc}

In Subsection \ref{sec:posing} above we asked Question 2, which is about the existence of low order Hermitian charges
commuting with the circuits. For the new free fermionic circuits this question is open. In the following we present some
preliminary results, obtained with the computer program \textit{Mathematica}. 

Let us first describe the difficulty behind finding the charges. The first problem is that  a brickwork circuit with
multiple ``layers'' will likely 
increase the order the lowest charge in the $h_k$ generators. Already for simplest case of the Ising algebra we
established in \ref{sec:ising} that a brickwork circuit with two layers results in a charge that is a second order
polynomial in the $h_k$. This order is likely to increase for more complicated circuits. The second difficulty comes
from the boundary terms, that appear in the expected charges, making the extraction of the ``bulk'' part of the
commuting charges more difficult.

Preliminary computations using \textit{Mathematica} show that there are indeed commuting charges for the circuits that
we introduced above, but at present we have closed form results (applicable for increasing values of $M$) only for
limited cases.

For example consider the circuit $\VV=G\cdot G^T$ with $G$ given by \eqref{G1} with $M=2k$. For simplicity let us choose
all angles to be equal, thus $\varphi_j=\varphi$.  Then we find the
lowest order commuting charge
\begin{equation}
  \begin{split}
    Q=& \cos^3(2\varphi) h_1+ \cos(2\varphi) h_2+\\
&  (\cos^2(2\varphi) h_3+  h_4)+  (\cos^2(2\varphi) h_5+  h_6)+\dots+ (\cos^2(2\varphi) h_{M-3}+  h_{M-2})+\\
   & \cos^2(2\varphi) h_{M-1}   + \cos(2\varphi) h_{M}+\\
   &    \sin^2(2\varphi)\cos(2\varphi)  h_2h_3h_5 +  \\
   &\sin^2(2\varphi)( h_4h_5h_7+ h_6h_7h_9+\dots+
     h_{M-4}h_{M-3}h_{M-1}).
  \end{split}
\end{equation}
We can see that there is a bulk part which is homogeneous, but multiple boundary terms also appear. 
Note that in the limit $\varphi\to 0$ the third order terms disappear, and we obtain simply $\sum_j h_j$.

Numerical investigation shows that this Hermitian charge itself is free fermionic.
However, it is fine tuned to commute with the circuit $\VV$, and its terms can not be completely independent from each
other. Therefore, it does not immediately fall into the family of models treated by the graph theoretical framework of
\cite{fermions-behind-the-disguise,unified-graph-th}. 

Direct computations for small sizes show that the charge $Q$ appears in the direct expansion of $\VV$. This points to
the possibility of extracting higher charges from the expansions, and perhaps to construct a spectral parameter
dependent transfer matrix for this case too. Alternatively, a transfer matrix might be constructed from $Q$ directly. At
present it is not clear which method will eventually lead to the full proof of the free fermionic nature for this circuit.

\section{A free fermionic circuit in the generalized model}

\label{sec:FP}

In this Section we treat the model introduced in \cite{sajat-FP-model}, {which is a generalization of the FFD model,}
defined on a spin chain of length $L+2$ with the Hamiltonian
\begin{align}
  \label{FPH}
    H=C_{1}+\sum_{j=2}^{L}(A_{j}+B_{j}+C_{j}),
\end{align}
where 
\begin{align}
  \label{eq:FP-basic-op}
A_{j}=a_{j-1}a_{j}t_{j}^{A}
,\quad 
B_{j}=b_{j-1}b_{j}t_{j}^{B}
,\quad C_{j}=a_{j}b_{j}t_{j}^{c}.
\end{align}
Here $a_j,b_j$ with $j=1,\dots,L$ are real coupling constants, and $t_j^{A,B,C}$ are operators given by
\begin{align}
t_{j}^{A}=X_{j-1}X_{j}Z_{j+1}
,\quad 
t_{j}^{B}=Z_{j-2}Y_{j-1}Y_{j}
,\quad 
t_{j}^{C}=Z_{j-1}Z_{j+1}
.
\end{align}
We used the conventions of \cite{sajat-FP-model}, such that the sites of the chain are indexed as $0, 1, \dots, L,
L+1$.

In~\cite{sajat-FP-model}, it was demonstrated that the Hamiltonian~\eqref{FPH} and its transfer matrix are
free-fermionic; we review the key statements below in Subsection \ref{sec:genkey}.
In subsection~\ref{subsec:factorization-FP} we show that the transfer matrix of the generalized model can be
  factorized in a similar manner to the FFD case~\eqref{tfact}. Subsection~\ref{subsec:quantum-circuit-FP}
  demonstrates that the factorized transfer matrix can be expressed as a unitary quantum circuit with local gates.

If we set $a_j\equiv 0$  or $b_j\equiv 0$ then only the terms $B_j$ or $A_j$ survive in the Hamiltonian,
respectively. This implies that in both of these two limits we recover the FFD Hamiltonian treated earlier. In the case
$a_j=b_j=1$ we obtain the model equivalent to the one treated in \cite{gyuri-susy-1}, which in turn is equivalent to
the even earlier model of \cite{cooper-anyon} (see \cite{gyuri-susy-2}). Therefore, the above Hamiltonian interpolates
between previously discovered models, that appeared unrelated at that time.

{This model is important for the purely theoretical questions of free fermion solvability.}
The relevance of the model lies in the fact that the Hamiltonian does not directly fit into the framework of
\cite{fermions-behind-the-disguise,unified-graph-th}. There are two differences. First, the frustration graph of the
model does not satisfy the conditions given in \cite{fermions-behind-the-disguise,unified-graph-th} for free fermion
solvability. Second, the pre-factors for the different terms in the Hamiltonian are not independent from each other. In
fact, for a given $L$ there are $3L-2$ terms in the Hamiltonian, but only $2L$ coupling constants. It was remarked in
\cite{fermions-behind-the-disguise,unified-graph-th} that their conditions are sufficient but not necessary for the free
fermionic solvability, and that fine tuned cases do exist which are free fermionic only if there are special relations
between the pre-factors in the Hamiltonian. The model of \cite{sajat-FP-model} presents one such case.

The coupling constants in the model are arranged such that there is a special relation between the operators:
\begin{equation}
  A_jB_j+C_{j-1}C_j=0,
  \label{abcrel}
\end{equation}
and we also have
\begin{equation}
  [A_j,B_j]=[C_{j-1},C_{j}]=0.
\end{equation}
The relation \eqref{abcrel} is one of the cornerstones of the solvability of the model: it was shown in  \cite{sajat-FP-model} that
certain extended frustration graphs can be constructed if one takes into account this condition, such that the new
graphs already satisfy the requirements of \cite{fermions-behind-the-disguise}.

{Further commutation relations can be found in Appendix \ref{app:FP-commutation}.}

\subsection{Transfer matrix and free fermions in the generalized model}

\label{sec:genkey}

Here we review the key result of~\cite{sajat-FP-model}.
  The generalized model has a family of commuting extensive charges, which can be summed into a transfer matrix, and
  this transfer matrix is defined via a recursion relation in the volume $L$~\cite{sajat-FP-model}. The transfer matrix
  plays a crucial role in constructing the free fermionic operators, we review the key statements below.
 
We will use three different transfer matrices and set up recursion relations for them. First, we will use the
notation $T_L(v)$ for the transfer matrix {of the generalized model} acting on the $L$-site Hilbert space i.e.\ $\alpha_j=\beta_j=0$ for $j>L$ and
$j<1$. We also define $T_L^A(v)$ such that we also set $\beta_L=0$, and $T_L^B(v)$ such that $\alpha_L=0$ instead.
Then the transfer matrices for varying lengths obey the recursion relations 
\begin{align}
T^{}_L(v) &= T^A_{L}(v) + T^B_{L}(v) -T^{}_{L-1}(v) -v C_L T^{}_{L-2}(v)
,
\label{recursion}
\\ 
T^A_{L+1}(v) &= T^{}_{L}(v) -uA_{L+1} T^B_{L-1}(v)
,
\label{recursion-A}
\\
T^B_{L+1}(v) &= T^{}_{L}(v) - v B_{L+1} T^A_{L-1}(v)
,
\label{recursion-B}
\end{align}
for $L> 0$, with the initial conditions given by $T^{}_0=T^{}_{-1}=T^A_0=T^A_1=T^B_0=T^B_1=1$. 

It was proven in \cite{sajat-FP-model} that the transfer matrices form a commuting family:
\begin{equation}
  [T_L(u),T_L(v)]=0, \label{FPcomm}
\end{equation}
and the linear term in the spectral parameter is once again the Hamiltonian~\eqref{FPH}.

It was also proven in  \cite{sajat-FP-model} that the transfer matrix satisfies the inversion relation
\begin{equation}
  T_L(v)T_L(-v)=P_L(v^2),
\end{equation}
where $P_L(v^2)$ is a polynomial of order $S'=[(L+1)/2]$ in $v^2$. It is given by the following recursion relations analogous to
\eqref{recursion}-\eqref{recursion-B}, concerning three families of polynomials $P_L(v^2)$, $P_L^A(v^2)$, and $P_L^B(v^2)$: 
\begin{equation}
\begin{split}
  P_L(v^2) &= P^A_{L}(v^2) + P^B_{L}(v^2) -P_{L-1}(v^2) - \big(ua_L b_L\big)^2 P_{L-2}(v^2),
  \\ P^A_{L+1}(v^2) &= P_{L}(v^2) - \big(ua^{}_La^{}_{L+1}\big)^2 P^B_{L-1}(v^2),\\
  P^B_{L+1}(v^2) &=P_{L}(v^2) - \big(ub_Lb_{L+1}\big)^2 P^A_{L-1}(v^2),
\label{Precursion}
\end{split}
\end{equation}
for $L> 0$, defining $P_0=P_{-1}=P^A_0=P^A_1=P^B_0=P^B_1=1$. 

Let us denote by  $(\tilde v_k)^2$ with $k=1,\dots,S'$ the roots of the polynomials $P_L(v^2)$. They will give the fermionic
one-particle energies via the usual relation $\eps_k=1/\tilde v_k$. The free fermionic raising and lowering operators
are defined as
\begin{equation}
  \Psi_{\pm k}=\frac{1}{N_k} T(\mp \tilde v_k)\alpha_{L+1} T(\pm \tilde v_k)
  .
\end{equation}
Here $\alpha_{L+1}$ is an edge operator, which commutes with every term $t^{A,B,C}_j$, except for $t^A_{L}$ and $t^C_L$,
with which it anti-commutes. One possible choice is $\alpha_{L+1}=X_{L+1}$.  The number $N_k$ is a normalization factor
whose precise value is not important for our purposes.

It was shown in \cite{sajat-FP-model} that these operators satisfy the standard fermionic algebra, and that they also
satisfy
\begin{equation}
  [H,\Psi_{\pm k}]=\pm \frac{2}{\tilde v_k}\Psi_{\pm k},
\end{equation}
therefore they are indeed raising and lowering operators. Furthermore, it can be shown that the transfer matrix
factorizes into the fermionic eigenmodes as
\begin{equation}
  \label{Tgenf}
   T_L(v)=\prod_{k=1}^{S'} (1-v\eps_k [\Psi_k,\Psi_{-k}]),
\end{equation}
where we defined again $\eps_k=1/\tilde v_k$.  This formula is analogous to \eqref{Tffd}; the two differences are that now
there are $S'$ eigenmodes, and that both the transfer matrix and the fermions are different operators than in the case
of the FFD model.

The factorization \eqref{Tgenf} is non-local in nature, because the fermionic eigenmodes are non-local.

  \subsection{Factorization of the transfer matrix with local operators}
\label{subsec:factorization-FP}

Now we will show that the transfer matrix obtained in~\cite{sajat-FP-model} can be factorized using
 local gates in a similar manner with the FFD case~\eqref{tfact}, which will directly prove the commutativity via
 \eqref{FPcomm}. This factorization is essential in order to have a unitary quantum circuit with local gates.
 We stress that this factorization of the 
transfer matrix is a new result of this work, and it does not follow trivially from  previously published results.

Our key result is the following: The transfer matrix factorizes as
\begin{align}
 T_{L}(v) =G_L\cdot G^{\top}_L
  \,,
  \label{eq:FPfact}
\end{align}
where, for $L>1$, we define
\begin{align}
  \label{GMek2}
 G_L = & g_{1}^{C}  (g_{2}^{B}g_{2}^{A}g_{2}^{C})(g_{3}^{B}g_{3}^{A}g_{3}^{C})\cdots(g_{L}^{B}g_{L}^{A}g_{L}^{C})
 ,
  \\
 G^{\top}_L = & (g_{L}^{C}g_{L}^{A}g_{L}^{B})(g_{L-1}^{C}g_{L-1}^{A}g_{L-1}^{B}) \cdots (g_{2}^{C}g_{2}^{A}g_{2}^{B}) g_{1}^{C}
 ,
  \end{align}
 and we also define $G_1 = G_1^\top = g_1^C$.
 Here $g_{j}^{A}$, $g_{j}^{B}$ and $g_{j}^{C}$ are linear operators localized around site $j$:
\begin{align}
 g_{j}^{A}=\cos\frac{\phi_{j}^{A}}{2}+t_{j}^{A}\sin\frac{\phi_{j}^{A}}{2},\quad g_{j}^{C}=\cos\frac{\phi_{j}^{C}}{2}+t_{j}^{C}\sin\frac{\phi_{j}^{C}}{2},\quad g_{j}^{B}=\cos\frac{\phi_{j}^{B}}{2}+t_{j}^{B}\sin\frac{\phi_{j}^{B}}{2}
 .
\end{align}
The angles are determined from the following recursion equation:
\begin{align}
    \label{eq:recursion-angle-FP-a}
 \sin\phi_{j+1}^{A}
 =&
 -\frac{v a_{j} a_{j+1}}{\cos\phi_{j-1}^{C}\cos\phi_{j-1}^{A}\cos\phi_{j}^{A}\cos\phi_{j}^{B}\cos\phi_{j}^{C}} 
 \qquad \text{for $j\ge 1$}
 ,\\
    \label{eq:recursion-angle-FP-b}
 \sin\phi_{j+1}^{B}
 =&
 -\frac{v b_{j} b_{j+1}}{\cos\phi_{j-1}^{C}\cos\phi_{j-1}^{B}\cos\phi_{j}^{A}\cos\phi_{j}^{B}\cos\phi_{j}^{C}}
 \qquad \text{for $j\ge 1$}
 ,\\
    \label{eq:recursion-angle-FP-c}
 \sin\phi_{j+1}^{C}
 =&
 -\frac{v a_{j+1} b_{j+1}}{\cos^{2}\phi_{j}^{C}\cos\phi_{j}^{A}\cos\phi_{j}^{B}\cos\phi_{j+1}^{A}\cos\phi_{j+1}^{B}}
 \qquad \text{for $j\ge 0$}
 ,
\end{align}
together with the initial condition
\begin{align}
    &
    \phi_{0}^{A} = \phi_{0}^{B} = \phi_{0}^{C} = \phi_{1}^{A} = \phi_{1}^{B} = 0
    .
    \label{eq:initial-cond}
\end{align}
 The detail of the proof of the factorization of the transfer matrix~\eqref{eq:FPfact} is given in Appendix \ref{app:proof-factorization-FP}.
{The transfer matrices $T^A_L(v)$ and $T^B_L(v)$ defined in the previous Subsection are factorized as ($L > 1$)}
\begin{align}
  \label{eq:FPfact-A}
 T^A_{L}(v) &= G_{L-1} \left(g_{L}^{A}\right)^2 G^{\top}_{L-1}
 ,
  \\
  \label{eq:FPfact-B}
 T^B_{L}(v) &= G_{L-1} \left(g_{L}^{B}\right)^2 G^{\top}_{L-1}
 .
\end{align}

The angles defined by~\eqref{eq:recursion-angle-FP-a}-\eqref{eq:recursion-angle-FP-c} satisfy the following relation:
\begin{align}
  \label{another-relation-FP}
 \sin\phi_{j-1}^{C}\sin\phi_{j}^{C}=&\tan\phi_{j}^{A}\tan\phi_{j}^{B}
 \qquad \text{for $j\ge 1$}
 .
\end{align}
The proof of this relation is given in the Appendix~\eqref{app:proof-another-relation}.

\subsection{The free fermiomic quantum circuit}
\label{subsec:quantum-circuit-FP}

Here we show that the factorized transfer matrix can be expressed as an unitary quantum circuit.
For real values of $v$ the transfer matrix $T(v)$ is Hermitian, this follows from the recursion relations \eqref{recursion}-\eqref{recursion-B}. It follows from the
recursion relations for the angles, that the operators $g_j^{A,B,C}$ are also Hermitian for real $v$, at least in a
neighbourhood of $v=0$. Correspondingly, if we set $v=iy$ with $y\in\valos$, then every $g_j^{A,B,C}$ becomes
proportional to a unitary matrix. In order to see this, let use set $\phi_j^{x}=i\theta_j^x$ with $x=A,B,C$. We get the
operators
\begin{align}
  \label{gunit}
    g_{j}^{A}=\cosh\frac{\theta_{j}^{A}}{2}+it_{j}^{A}\sinh\frac{\theta_{j}^{A}}{2},\quad
    g_{j}^{C}=\cosh\frac{\theta_{j}^{C}}{2}+it_{j}^{C}\sinh\frac{\theta_{j}^{C}}{2},\quad
    g_{j}^{B}=\cosh\frac{\theta_{j}^{B}}{2}+it_{j}^{B}\sin\frac{\theta_{j}^{B}}{2}
    ,
  \end{align}
with the recursion relations
\begin{align}
    \sinh\theta_{j+1}^{A}
    =&
    -\frac{y a_{j} a_{j+1}}{\cosh\theta_{j-1}^{C}\cosh\theta_{j-1}^{A}\cosh\theta_{j}^{A}\cosh\theta_{j}^{B}\cosh\theta_{j}^{C}} 
    \qquad \text{for $j\ge 1$}
    ,\\
    \sinh\theta_{j+1}^{B}
    =&
    -\frac{y b_{j} b_{j+1}}{\cosh\theta_{j-1}^{C}\cosh\theta_{j-1}^{B}\cosh\theta_{j}^{A}\cosh\theta_{j}^{B}\cosh\theta_{j}^{C}}
    \qquad \text{for $j\ge 1$}
    ,\\
    \sinh\theta_{j+1}^{C}
    =&
    -\frac{y a_{j+1} b_{j+1}}{\cosh^{2}\theta_{j}^{C}\cosh\theta_{j}^{A}\cosh\theta_{j}^{B}\cosh\theta_{j+1}^{A}\cosh\theta_{j+1}^{B}}
    \qquad \text{for $j\ge 0$}
    ,
\end{align}
together with the initial conditions
\begin{align}
    &
    \theta_{0}^{A} = \theta_{0}^{B} = \theta_{0}^{C} = \theta_{1}^{A} = \theta_{1}^{B} = 0
    .
    \label{eq:initial-cond2}
\end{align}
The operators $t^{A,B,C}_j$ are all Hermitian, therefore in this case the operators in \eqref{gunit} are proportional to
unitary matrices. They describe local gates acting on two or three sites on the chain.

We can now introduce a further parametrization, similar to the case of the FFD model, such that we directly get unitary
matrices of the form
\begin{equation}
  u_j^{A}=e^{i\varphi_j^A t^A_j},\quad u_j^{B}=e^{i\varphi_j^B t^B_j},\quad u_j^{C}=e^{i\varphi_j^C t^C_j}
\end{equation}
The identification is simply
\begin{equation}
  \label{uj}
\tanh\left(\frac{\theta^x_j}{2}\right)=\tan(\varphi^x_j),\qquad  u_j^x=\frac{g^x_j}{\sqrt{\cosh(\theta_j^x)}},\qquad x=A, B, C
\end{equation}
and the recursion for the new angles is given by
\begin{align}
    \tan 2\varphi_{j+1}^{A}
    =&
       -y a_{j} a_{j+1}
       \cos 2\varphi_{j-1}^{C}\cos 2\varphi_{j-1}^{A}\cos 2\varphi_{j}^{A}\cos 2\varphi_{j}^{B}\cos 2\varphi_{j}^{C} 
    \qquad \text{for $j\ge 1$}
    ,\\
    \tan 2\varphi_{j+1}^{B}
    =&
       -y b_{j} b_{j+1}
       \cos 2\varphi_{j-1}^{C}\cos 2\varphi_{j-1}^{B}\cos 2\varphi_{j}^{A}\cos 2\varphi_{j}^{B}\cos 2\varphi_{j}^{C}
    \qquad \text{for $j\ge 1$}
    ,\\
    \tan 2\varphi_{j+1}^{C}
    =&
       -y a_{j+1} b_{j+1}
       \cos^{2}2\varphi_{j}^{C}\cos 2\varphi_{j}^{A}\cos 2\varphi_{j}^{B}\cos 2\varphi_{j+1}^{A}\cos 2\varphi_{j+1}^{B}
    \qquad \text{for $j\ge 0$}
    ,
\end{align}
together with the initial conditions
\begin{align}
    &
    \varphi_{0}^{A} = \varphi_{0}^{B} = \varphi_{0}^{C} = \varphi_{1}^{A} = \varphi_{1}^{B} = 0
    .
    \label{eq:initial-cond3}
\end{align}

There are a total number of $3L-2$ angles, but only $2L$ independent parameters $a_j$, $b_j$, with $j=1,\dots,L$. It is
useful to compute a constraint for the angles directly, independent from the original parameters of the
Hamiltonian. This constraint would then determine whether there can be a set of parameters $\{a_j,b_j\}_{j=1,\dots,L}$
which would produce a given set of angles.

Taking the appropriate product of the recursion relations we find the constraint
\begin{equation}
  \label{abcrelangles}
  \begin{split}
\tan 2\varphi_{j-1}^{C}
\tan 2\varphi_{j}^{C}
=
\sin 2\varphi_{j}^{A}
 \sin 2\varphi_{j}^{B},\qquad j=2,\dots,L
  \end{split}
\end{equation}
which can also be derived from~\eqref{another-relation-FP}. These relations can be seen as an analogue of \eqref{abcrel}
for the angles of the unitary gates. It can be seen that if 
\eqref{abcrelangles} is satisfied, then there are always some parameters $a_j$, $b_j$ of the Hamiltonian which lead to
these angles via the solution of the recursion relations. The parameter $y$ of the transfer matrix plays the role of an
overall normalization for the parameters $a_j$, $b_j$, and it could be set to $y=1$.

The constraints \eqref{abcrelangles} can be used to specify free fermionic circuits without referring to the original
parameters $a_j$, $b_j$. 
  
It follows from the expression \eqref{Tgenf} that the eigenvalues of our unitary quantum circuit take the form
\begin{equation}
  \prod_{k=1}^{S'} \frac{1\pm i \eps_k}{\sqrt{1+\eps_k^2}},
\end{equation}
where $\eps_k=1/\tilde v_k$ and the $\tilde v_k$ are given by the roots of the polynomials $P_L(v^2)$, as described in
Subsection \ref{sec:genkey}. The normalization of the eigenvalues follows from unitarity, and the difference in the
normalization between formula above and expression \eqref{Tgenf} originates in the normalization \eqref{uj} for the
local gates $u_j^x$ with $x=A, B, C$.

  \section{Conclusions}

\label{sec:concl}

In this work we studied unitary quantum circuits with a hidden free fermionic
spectrum. 
Our central result is that a quantum circuit with 
the
specific structure $G\cdot G^T$ is free fermionic in many regular geometries, both in the original 4-fermion model of
Fendley, and in the generalized model of \cite{sajat-FP-model}. In the case of the linear depth circuits we proved the
free fermionic property analytically, while for the constant depth circuits it was demonstrated numerically. {Our
  results could serve as a starting point for the study of non-equilibrium dynamics generated by the free fermionic circuits.}

Eventually it would be desirable to develop a graph theoretical framework for the unitary circuits, which would generalize the
results of the works \cite{fermions-behind-the-disguise,unified-graph-th} to the current setting. This could be applied to other
models within the class treated in Section \ref{sec:problem}, and not only to the FFD algebra. Ideally, a complete
theory could treat also those models, where there is a certain
amount of fine 
tuning between the coupling constants in the Hamiltonian. An examples for this is the generalized model treated in
Section \ref{sec:FP} in this work.

Our results have some relevance (or perhaps lessons) even to the standard Yang-Baxter integrability. In the previous
works dealing with the hidden free fermions it was often noticed that the systems are integrable in the traditional
sense, and this remains true even with periodic boundary conditions, when the free fermionic properties break
down. Standard methods for integrable models require spatially homogeneous coupling constants, and in those cases new
charges also appear for these models, see \cite{sajat-FP-model}. These connections suggest that the circuits constructed in
this work should be studied and understood also with the methods of Yang-Baxter integrability, perhaps starting with the
periodic cases. It is likely that many regular circuit geometries  will be integrable, even if they are not free
fermionic, and this includes the standard brickwork circuits. It could be useful to better understand the special
structure $\VV=G\cdot G^T$ also from the point of view of Yang-Baxter integrability, perhaps using the framework of
\cite{sajat-medium}.

{A further interesting question is whether there can be quantum circuits with hidden free fermions, such that the
  circuits are still built from local gates, but there is no local Hamiltonian associated with the circuit. We can
not rule out such possibilites, but our current methods are adapted to those cases when there is a local
Hamiltonian in the model, at least for the infinitesimal angle limit. Alternatively, it might also happen to certain circuit geometries become free fermionic for
special choices of the local rotation angles. 
Furthermore, it is a nontrivial question whether there still exists a local Hamiltonian that commutes with the quantum
circuit even when the angles are special finite numbers.
Our treatment does not cover such cases.}

We leave the aforementioned open questions to further work.

\section*{Acknowledgements}

We are thankful to Paul Fendley and Adrian Chapman for inspiring discussions, and 
to Lorenzo Piroli and also Adrian Chapman for communicating their related results. We are also grateful for M\'arton
Mesty\'an for an independent check of our numerical results. Some of the ideas for the computations in Section
\ref{sec:FP} originated from a collaboration with Tam\'as Gombor in the very early stages of this project. 
This work was supported by the
Hungarian National Research, Development and Innovation Office, NKFIH Grant No. K-145904 and the 
NKFIH excellence grant TKP2021$\_$NKTA$\_$64.

\appendix

\section{Free fermionic algebra}

\label{sec:aferm}

Here we discuss the connection between the free fermionic spectrum and the free fermionic operators, together with a certain
gauge freedom in the presence of degeneracies.

Let us assume that a certain Hamiltonian has a free fermionic spectrum in a finite volume $L$. According to the
definition in the main text, this entails that every energy level has the same degeneracy. We denote this degeneracy now
with $D$, and the number of fermionic eigenmodes is denoted by $n$. Let us assume that we have chosen a definite
ordering of the eigenmodes, for example based on the magnitude of the $\eps_k$.

If every energy level is non-degenerate, then the construction of the fermionic operators is obvious. Let us denote the
eigenstates as
\begin{equation}
  \ket{\tau_1,\tau_2,\dots,\tau_n}, \qquad \tau_k=\pm 1,
\end{equation}
where the energy is given by
\begin{equation}
  E=\sum_{k=1}^n \tau_k \eps_k.
\end{equation}
Then the fermionic creation operators $\Psi_k$, $k=1, 2,\dots, n$ can be introduced as
\begin{equation}
  \label{psidef}
  \Psi_k=\mathop{\sum_{\tau_\ell=\pm 1}}_{\ell\ne k} \left(\prod_{\ell=1}^{k-1}\tau_\ell\right)
   \ket{\tau_1,\tau_2,\dots,\tau_k=+1,\dots,\tau_n} \bra{\tau_1,\tau_2,\dots,\tau_k=-1\dots,\tau_n}.
\end{equation}
The corresponding annihilation operator is then $\Psi_{-k}=(\Psi_k)^\dagger$.

This definition ensures the canonical commutation relations (here $k, \ell>0$)
\begin{equation}
  \{\Psi_k,\Psi_\ell\}=0,\qquad   \{\Psi_{-k},\Psi_{-\ell}\}=0,\qquad \{\Psi_k,\Psi_{-\ell}\}=\delta_{k,\ell}
\end{equation}
and the Hamiltonian is then written as
\begin{equation}
H= \sum_{k=1}^n \eps_k \ZZ_k, \qquad \ZZ_k=[\Psi_k,\Psi_{-k}].
\end{equation}

The situation is more involved if there is a uniform degeneracy $D$ in the spectrum. In these cases the fermionic operators
are not well defined. Specific choices can be made if one fixes a basis within each degenerate level. Thus we introduce
the basis vectors
\begin{equation}
  \ket{\tau_1,\tau_2,\dots,\tau_n|a}, \qquad \tau_k=\pm 1,\quad a=1,\dots,D
\end{equation}
for each choice of the signs. Then the fermionic operators can be defined via the formula \eqref{psidef}, supplementing it with a
summation over $a=1,\dots,D$. In such a case the fermionic operators do depend on the choice of the basis, but the
exchange relations are always satisfied. Note however, that the $\ZZ_k$ operators are always well-defined: the
rotations of the basis elements within the respective eigenspaces leave every $\ZZ_k$ invariant.

The freedom of choice of the fermionic operators appeared in the works \cite{fendley-fermions-in-disguise,sajat-FP-model}:
there it entered through the choice for the boundary operator $\chi_{M+1}$. 

\section{Numerical data for circuits in the FFD model}

\label{sec:appnum}

In this Appendix we present a selection of the numerical data that we obtained.
In all the examples below we chose the angles of the gates $u_k$ to be equal to $\varphi_k=\sin(k)$, $k=1, 2,
\dots$. For our purposes these angles behave as random numbers, and it is convenient to fix them at the beginning,
making any future comparisons easier.

Then we compute the numerical eigenvalues $\lambda=e^{i\phi}$ of various circuits, and present their angles
$\phi=-i\log(\lambda)$ with the choice that $\phi\in [-\pi,\pi]$. It was explained in the main text that the eigenvalues
come in complex conjugate pairs, therefore every $\phi$ appears with two different signs.

The differences in the angles are computed by taking the logarithm of the ratios $\lambda_j/\lambda_k$ for all possible
$j,k$. This is equivalent to computing all possible differences of the $\phi$ angles, supplemented with accounting for the
periodicity of $2\pi$ in the differences.

If a unitary operator has $2^n$ unique eigenvalues, and if it is free fermionic, then the differences of the angles will
have $3^n$ unique values. In contrast, if the set of angles is reflection symmetric but otherwise the spectrum is
generic, then there will be $2\cdot 4^{n-1}+1$ unique differences. If the spectrum is the union of multiple free
fermionic spectra (corresponding to sectors determined by symmetries) then the number of unique differences will
typically be between
these two numbers. 

The data below are produced by using the minimal representations discussed in the main text. We considered circuits in
the FFD algebra with $M\le 12$; these circuits can be represented on $L=6$
qubits, therefore the numerical diagonalization can be implemented with minimal effort.
The reader is invited to repeat these numerical computations. 

\bigskip

For the circuit $\VV=u_1u_2u_3u_4$ we find 8 different angles
\begin{equation}
 \pm 0.90790,\ \pm  0.93150,\ \pm   1.47642,\ \pm   1.69880.
\end{equation}
Their differences give 33 possible values, therefore this circuit is found to be not free fermionic. This is in
accordance with the computations in Subsection \ref{sec:M4}.

The reader might wonder whether this circuit and its direct extensions to larger $M$ could be free fermionic in a wider
sense. Perhaps there exist some number of superselection sectors (possibly growing with $M$), such that the model is
free fermionic within each sector. In order to test this hypothese we also consider the spectrum of
$\VV=u_1u_2u_3u_4u_5u_6u_7u_8u_9u_{10}u_{11}u_{12}$. This circuit can be represented on 
6 qubits. We obtained the following $2^6=64$ unique angles:
\begin{equation}
  \begin{split}
&    \pm 0.059883,\ \pm   0.093810,\ \pm    0.131038,\ \pm    0.158502,\ \pm    0.520456,\ \pm    0.599805,\ \pm    0.676978,\ \pm    0.738023  \\
& \pm    0.797634,\ \pm    0.881392,\ \pm    1.122102,\ \pm    1.153706,\ \pm    1.236953,\ \pm    1.256336,\ \pm    1.371849,\ \pm    1.442192\\
 & \pm   1.512413,\ \pm   1.558035,\ \pm    1.716604,\ \pm    1.808851,\ \pm    1.999112,\ \pm    2.152408,\ \pm    2.173039,\ \pm    2.207634\\
 & \pm   2.386399,\ \pm    2.469262,\ \pm    2.501312,\ \pm    2.629893,\ \pm    2.771071,\ \pm    2.812209,\ \pm    3.085452,\ \pm    3.133734\\
  \end{split}
\end{equation}
They give $2\cdot 4^5+1=2049$ unique differences, which is the value for the differences with a generic
reflection symmetric spectrum. This excludes the possibility of having a union of a few number of free fermionic spectra.

Let us now turn to the free fermionic circuits.
For the circuit $\VV=G\cdot G^T$ with $G=u_1u_2u_3u_4$ we find 4 different angles
\begin{equation}
 \pm 0.11319,\  \pm 3.00427.
\end{equation}
As explained in the main text, this is a free fermionic circuit.

In the remaining examples we always consider the structure $\VV=G\cdot G^T$.
In the case of $G=u_1u_2u_3u_4u_5u_6u_7$ we find 8 different angles
\begin{equation}
\pm 1.2386,\   \pm 1.3379,\   \pm 1.8236,\pm   1.8830.
\end{equation}
These values give 27 unique differences, confirming that the circuit is free fermionic.

We also treat the new geometries that we found, and present one numerical example for each geometry.

For the case of $G$ given by \eqref{G1} we consider the specialization to $M=12$, in which case we obtain 16 unique
angles
\begin{equation}
\pm  0.38339,\ \pm   0.47935,\ \pm    0.57221,\ \pm    0.66313,\ \pm    2.47595,\ \pm    2.57190,\ \pm 
   2.66477,\ \pm    2.75568.
\end{equation}
These numbers give 81 unique differences, thus confirming the free fermionic nature.

For $G$ of the form \eqref{G2} we take again $M=12$, and we obtain 16 unique
angles
\begin{equation}
 \pm 0.33711,\ \pm    0.59747,\ \pm    1.06579,\ \pm    1.32615,\ \pm    1.81545,\ \pm    2.07581,\ \pm 
   2.54413,\ \pm    2.80449.
\end{equation}
These numbers also give 81 unique differences, confirming the free fermionic nature.

For $G$ of the form \eqref{G3} we take again $M=12$, and we obtain  16 unique
angles
\begin{equation}
\pm 0.47513,\ \pm   0.69983,\ \pm    0.95615,\ \pm    1.18029,\ \pm    1.96102,\ \pm    2.18572,\ \pm 
   2.44204,\ \pm    2.66618.
\end{equation}
Once again, these numbers give 81 unique differences,  confirming the free fermionic nature.

We also consider the example of $G=u_1u_4u_3u_5u_6u_8$, which was presented in Section \ref{sec:numerics} as one of the
simplest 
circuits with structure $\VV=G\cdot G^T$ which are not free fermionic. We obtain 8 unique angles
\begin{equation}
 \pm 1.0080,\  \pm 1.3259,\ \pm  1.8400,\ \pm   2.1092,
\end{equation}
and these numbers give 33 unique differences. Therefore this circuit is not free fermionic.

\section{Proof of the factorization in the generalized model}

\label{app:FPproof}

\subsection{Commutation relations}
\label{app:FP-commutation}
 
Now we show commutation relations of the operators~\eqref{eq:FP-basic-op}.
The various terms in the Hamiltonian satisfy the following commutation relations, which is the generalization of the FFD algebra~\eqref{ffdalgebra}. 
Two different terms anti-commute in the following cases:
\begin{align}
 \{A_{j},A_{j\pm1}\}=&\{A_{j},A_{j\pm2}\}=0,\\\{A_{j},C_{j\pm1}\}=&\{B_{j},C_{j\pm1}\}=0,\\\{A_{j},C_{j}\}=&\{B_{j},C_{j}\}=0,\\\{A_{j},B_{j\pm1}\}=&0,\\\{A_{j},C_{j-2}\}=&\{B_{j},C_{j-2}\}=0,  
\end{align}
and all the other pairs commute:
\begin{align}
[A_{j},B_{j}] =[A_{j},B_{j\pm2}]=[C_{j},C_{l}] =
[A_{j},C_{j+2}] =[B_{j},C_{j+2}]=0.    
\end{align}

These relations form the basis of the subsequent proof for the factorization of the transfer matrix of the generalized model.

\subsection{Proof of Eq.~\eqref{another-relation-FP}}
\label{app:proof-another-relation}

Here, we prove that the angles defined by the recursion equations~\eqref{eq:recursion-angle-FP-a}-\eqref{eq:recursion-angle-FP-c} indeed satisfy Eq.~\eqref{another-relation-FP}. Equivalently, we will demonstrate the following equation:
\begin{align}
  \label{equivalent-relation}
 \cos\phi_{j}^{A}\cos\phi_{j}^{B}\sin\phi_{j-1}^{C}\sin\phi_{j}^{C}
 =
 \sin\phi_{j}^{A}\sin\phi_{j}^{B}
 .
\end{align}
The proof of~\eqref{equivalent-relation} is as follows:
\begin{align}
    &
 \cos\phi_{j}^{A}\cos\phi_{j}^{B}\sin\phi_{j-1}^{C}\sin\phi_{j}^{C}
 \nonumber\\
 =&
 \cos\phi_{j}^{A}\cos\phi_{j}^{B}\frac{ua_{j-1}b_{j-1}}{\cos^{2}\phi_{j-2}^{C}\cos\phi_{j-2}^{A}\cos\phi_{j-2}^{B}\cos\phi_{j-1}^{A}\cos\phi_{j-1}^{B}}
 \nonumber\\
    &\hspace*{15em}
    \times
 \frac{ua_{j}b_{j}}{\cos^{2}\phi_{j-1}^{C}\cos\phi_{j-1}^{A}\cos\phi_{j-1}^{B}\cos\phi_{j}^{A}\cos\phi_{j}^{B}}
 \nonumber\\
 =&
 \frac{ua_{j-1}a_{j}}{\cos\phi_{j-2}^{C}\cos\phi_{j-1}^{C}\cos\phi_{j-2}^{A}\cos\phi_{j-1}^{A}\cos\phi_{j-1}^{B}}
 \times 
 \frac{ub_{j-1}b_{j}}{\cos\phi_{j-2}^{C}\cos\phi_{j-2}^{B}\cos\phi_{j-1}^{C}\cos\phi_{j-1}^{A}\cos\phi_{j-1}^{B}}
 \nonumber\\
 =&
 \sin\phi_{j}^{A}\sin\phi_{j}^{B}
 ,
\end{align}
where we used the recursion equations~\eqref{eq:recursion-angle-FP-a} and~\eqref{eq:recursion-angle-FP-b} in the first equality, and used the recursion equation~\eqref{eq:recursion-angle-FP-c} in the second equality.

This concludes the proof of Eq.~\eqref{another-relation-FP}.

\subsection{Proof of Eqs.~\eqref{eq:FPfact},~\eqref{eq:FPfact-A} and~\eqref{eq:FPfact-B}}
\label{app:proof-factorization-FP}


To prove the factorization of the transfer matrices~\eqref{eq:FPfact},~\eqref{eq:FPfact-A} and~\eqref{eq:FPfact-B}, we will prove the operator determined from them indeed satisfies the recursion equations of the transfer matrix~\eqref{recursion}-\eqref{recursion-B} derived in~\cite{sajat-FP-model}.

The recursion equation~\eqref{recursion} for $L > 1$ is confirmed as follows:
\begin{align}
  &
 T_{L}-T_{L}^{A}-T_{L}^{B}+T_{L-1} + v C_{L}T_{L-2}
 \nonumber\\
 =&
 G_{L-2}
 \Biggl[
 (g_{L-1}^{A})^{2}(g_{L-1}^{B})^{2}t_{L-1}^{C}t_{L}^{C}\cos\phi_{L}^{A}\cos\phi_{L}^{B}
    \times
 \left(\sin\phi_{L-1}^{C}\sin\phi_{L}^{C}-\tan\phi_{L}^{A}\tan\phi_{L}^{B}\right)
 \nonumber
    \\
    &
 +
 t_{L}^{C}\cos\phi_{L-1}^{A}\cos\phi_{L-1}^{B}\cos\phi_{L}^{A}\cos\phi_{L}^{B}
 \nonumber\\
    &\hspace*{3em}
    \times
 \left(-\sin\phi_{L-1}^{C}\tan\phi_{L}^{A}\tan\phi_{L}^{B}+\sin\phi_{L}^{C}+\frac{v\alpha_{L}\beta_{L}}{\cos\phi_{L-1}^{A}\cos\phi_{L-1}^{B}\cos\phi_{L}^{A}\cos\phi_{L}^{B}}\right)
 \Biggr]G_{L-2}^{\top}
 \nonumber\\
 =&
 G_{L-2}
 \Biggl[
 t_{L}^{C}\cos\phi_{L-1}^{A}\cos\phi_{L-1}^{B}\cos\phi_{L}^{A}\cos\phi_{L}^{B}\cos^2\phi_{L-1}^{C}
 \nonumber
    \\
    &\hspace*{7em}
    \times
 \left(\sin\phi_{L}^{C}+\frac{v\alpha_{L}\beta_{L}}{\cos\phi_{L-1}^{A}\cos\phi_{L-1}^{B}\cos\phi_{L}^{A}\cos\phi_{L}^{B} \cos^2\phi_{L-1}^{C}}\right)
 \Biggr]G_{L-2}^{\top}
 \nonumber\\
 =& 0
 ,
\end{align}
where the first equality follows from expanding the expressions in~\eqref{eq:FPfact},~\eqref{eq:FPfact-A} and~\eqref{eq:FPfact-B} and applying the commutation relations explained in Appendix~\ref{app:FP-commutation}. 
In the second equality, we used Eq.\eqref{another-relation-FP}, and in the final equality, we employed Eq.~\eqref{recursion}.
{For $L=2$, the same equality holds after removing $G_0$ and $G_0^\top$.}
The recursion equation~\eqref{recursion} for the case of $L=1$ is also confirmed as follows:
\begin{align}
 T_{1} - T_{1}^{A} - T_{1}^{B} + T_{0} + v C_{1}T_{-1}
 =&
 (g_1^{C})^2 -1 + v C_{1}
 \nonumber\\
 =&
 ( \sin\phi_{1}^{C} + v a_1 b_1 ) t_{1}^{C} 
 =
  0
 ,
\end{align}
where we used Eq.~\eqref{eq:recursion-angle-FP-c} with $j=0$ in the last equality.
Thus, we have proven that the transfer matrices~\eqref{eq:FPfact},~\eqref{eq:FPfact-A} and~\eqref{eq:FPfact-B} indeed satisfy the defining recursion equation for the transfer matrix~\eqref{recursion}.

The recursion equation~\eqref{recursion-A} for the case of  $L>1$ is confirmed as follows:
\begin{align}
& T_{L+1}^{A} - T_{L} + v A_{L+1} T_{L-1}^{B}
 \nonumber\\
 &= G_{L-2} (g_{L-1}^{B})^{2} t_{L+1}^{A}
 \left[
 \cos\phi_{L-1}^{C} \cos\phi_{L-1}^{A} \cos\phi_{L}^{A}\cos\phi_{L}^{B} \cos\phi_{L}^{C} \sin\phi_{L+1}^{A}
 +
 v a_{L}a_{L+1}
 \right]
 G_{L-2}^{\top}
 \nonumber\\
 &=
  0
 ,
\end{align}
where in the second equality, we used Eq.~\eqref{recursion-A}.
The recursion equation~\eqref{recursion-A} for $L=1$ is confirmed as follows:
\begin{align}
  \nonumber
  T_{2}^{A}-T_{1}+uA_{L+1}T_{0}^{B}&=
 g_1^{C} (g_2^{A})^2 g_1^{C} - (g_1^{C})^2
 \\
&= (\sin\phi_1^C \sin \phi_2^A + v a_1a_2) t_2^A=0,
\end{align}
where we used Eq.~\eqref{eq:recursion-angle-FP-a} with $j=1$ in the last equality.
Thus, we have proven that the transfer matrices~\eqref{eq:FPfact},~\eqref{eq:FPfact-A} and~\eqref{eq:FPfact-B} indeed satisfy the defining recursion equation for the transfer matrix~\eqref{recursion-A}.

In a similar way, we can prove that the factorized transfer matrices~\eqref{eq:FPfact},~\eqref{eq:FPfact-A} and~\eqref{eq:FPfact-B} actually satisfy the recursion equation~\eqref{recursion-B}.

\bigskip


\providecommand{\href}[2]{#2}\begingroup\raggedright\endgroup

\end{document}